\documentclass[journal,twoside,web]{ieeecolor}

\usepackage{generic}
\usepackage{cite}
\usepackage{amsmath,amssymb,amsfonts}
\usepackage{graphicx}
\usepackage{algorithm}
\usepackage{textcomp}
\def\BibTeX{{\rm B\kern-.05em{\sc i\kern-.025em b}\kern-.08em
    T\kern-.1667em\lower.7ex\hbox{E}\kern-.125emX}}

\usepackage[sets,operations,nn,colors,tikz]{cora-macs}

\usepackage{multirow}
\usepackage{dcolumn}
\usepackage{algpseudocode}
\algdef{SE}[DOWHILE]{Do}{doWhile}{\algorithmicdo}[1]{\algorithmicwhile\ #1}%

\usepackage{booktabs}
\usepackage{array} 
\newcolumntype{L}{>{$}l<{$}} 
\newcolumntype{R}{>{$}r<{$}} 
\usepackage{makecell}
\usepackage{tablefootnote}

\let\labelindent\relax
\usepackage[inline]{enumitem}

\usepackage{romannum}

\usepackage{amsthm}
\usepackage{mathrsfs}

\usepackage{aligned-overset}

\usepackage{hyperref}

\usepackage{cleveref}

\newcommand{\X}{\mathcal{X}}
\newcommand{\Y}{\mathcal{Y}}

\newcommand{\Xinit}{\X_\mathrm{I}} 
\newcommand{\Xu}{\X_\mathrm{U}} 
\newcommand{\Xzero}{\X_\mathrm{Z}} 
\newcommand{\numZsets}[0]{\lambda}
\newcommand{\XzeroSub}[1]{{\Xzero}_{,#1}}

\newcommand{\XzeroUnion}[0]{\boldsymbol{\mathcal{X}}_\mathrm{Z}}
\newcommand{\XzeroUnionIter}[1]{\XzeroUnion^{(#1)}}

\newcommand{\BarrierFunc}[0]{\ensuremath{B}}
\newcommand{\nnParams}[0]{\ensuremath{\theta}}
\newcommand{\BarrierNN}[0]{\BarrierFunc_\nnParams}

\newcommand{\Yinit}{\Y_\mathrm{I}} 
\newcommand{\Yu}{\Y_\mathrm{U}} 
\newcommand{\Yzero}{\Y_\mathrm{Z}} 
\newcommand{\YzeroSub}[1]{{\Yzero}_{,#1}}
\newcommand{\YzeroUnion}[0]{\boldsymbol{\mathcal{Y}}_\mathrm{Z}}

\newcommand{\bLower}[1]{\underline{y_\mathrm{#1}}}
\newcommand{\bUpper}[1]{\overline{y_\mathrm{#1}}}
\newcommand{\buLower}[0]{\bLower{U}}
\newcommand{\buUpper}[0]{\bUpper{U}}
\newcommand{\binitLower}[0]{\bLower{I}}
\newcommand{\binitUpper}[0]{\bUpper{I}}

\newcommand{\bzeroLowerSub}[1]{\bLower{Z\ensuremath{,#1}}}
\newcommand{\bzeroUpperSub}[1]{\bUpper{Z\ensuremath{,#1}}}

\newcommand{\shortcZ}[5][]{\shortContSet[#1]{#2,#3,#4,#5}{cZ}} 

\newcommand{\cZ}[3][\contSet{Z}]{#1\vert_{#2\leq#3}}

\newcommand{\setLoss}[0]{\mathcal{L}}
\newcommand{\setLossi}[1]{\setLoss_\mathrm{#1}}
\newcommand{\setLossI}[0]{\setLossi{I}}
\newcommand{\setLossII}[0]{\setLossi{II}}
\newcommand{\setLossIII}[0]{\setLossi{III}}

\newcommand{\lie}{\mathcal{L}}
\newcommand{\zero}{\mathrm{zero}}

\newcommand{\0}[0]{\mathbf{0}}
\newcommand{\1}[0]{\mathbf{1}}

\newcommand{\cmat}[1]{\begin{bmatrix} #1 \end{bmatrix}} 

\newcommand{\zonoConMat}{\ensuremath{C}} 
\newcommand{\zonoConOff}{\ensuremath{d}} 
\newcommand{\numGens}{\ensuremath{q}} 
\newcommand{\abs}[1]{\left|#1\right|}

\newcommand{\ones}{\ensuremath{\mathbf{1}}}

\newcommand{\bigO}{\ensuremath{\mathcal{O}}}

\newcommand{\lightercolor}[3]{
  \colorlet{#3}{#1!#2!white}
}
\lightercolor{CORAcolorBlue}{50}{CORAcolorBlueLight}
\lightercolor{CORAcolorRed}{55}{CORAcolorRedLight}
\lightercolor{CORAcolorYellow}{60}{CORAcolorYellowLight}
\lightercolor{CORAcolorPurple}{70}{CORAcolorPurpleLight}
\lightercolor{CORAcolorGreen}{80}{CORAcolorGreenLight}

\newtheorem{theorem}{Theorem}
\newtheorem{proposition}{Proposition}

\newtheorem{definition}{Definition}

\usepackage[user,lastpage,titleref]{zref}
\usepackage{cleveref}

\crefname{section}{Sec.}{Sec.}
\crefname{subsection}{Sec.}{Sec.}
\crefname{figure}{Fig.}{Fig.}
\crefname{algorithm}{Alg.}{Alg.}
\crefname{table}{Tab.}{Tab.}
\crefname{example}{Ex.}{Ex.}
\crefname{definition}{Def.}{Def.}
\crefname{proposition}{Prop.}{Prop.}
\crefname{corollary}{Cor.}{Cor.}
\crefname{theorem}{Thm.}{Thm.}
\crefname{lemma}{Lemma}{Lemmas}
\crefname{appendix}{Appendix}{Appendix}

\Crefname{section}{Sec.}{Sec.}
\Crefname{subsection}{Sec.}{Sec.}
\Crefname{figure}{Fig.}{Fig.}
\Crefname{algorithm}{Alg.}{Alg.}
\Crefname{table}{Tab.}{Tab.}
\Crefname{example}{Ex.}{Ex.}
\Crefname{definition}{Def.}{Def.}
\Crefname{proposition}{Prop.}{Prop.}
\Crefname{theorem}{Thm.}{Thm.}
\Crefname{corollary}{Cor.}{Cor.}
\Crefname{lemma}{Lemma}{Lemmas}
\Crefname{appendix}{Appendix}{Appendix}

\usepackage{thm-restate}
\AtEndEnvironment{restatable}{
  \begin{proof}
    See \cref{app:proofs}.
  \end{proof}
}

\usepackage{mathtools,stackengine}
\stackMath
\newcommand{\stackEq}[2]{%
  \setbox0=\hbox{${}\mathrel{\stackon[-1pt]{#2}{\scriptstyle #1\strut}}{}$}
  \xdef\tmpwd{\dimexpr\the\wd0\relax}
  \kern.5\tmpwd\mathclap{\box0}&\kern.5\tmpwd
}

\usepackage{pifont}
\newcommand{\firstprop}[0]{P\,\Romannum{1}}
\newcommand{\secondprop}[0]{P\,\Romannum{2}}
\newcommand{\thirdprop}[0]{P\,\Romannum{3}}

\pgfplotsset{
every tick label/.append style={font=\footnotesize},
unsafe set/.style={area legend, thick, draw=CORAcolorUnsafe!85!black, fill=CORAcolorUnsafe!20!white},
initial set/.style={area legend, thick, draw=CORAcolorReachSet!85!black, fill=CORAcolorReachSet!20!white},
invalid zerolevel set/.style={area legend, thick, draw=CORAcolorYellow!85!black, fill=CORAcolorYellow!20!white},
valid zerolevel set/.style={area legend, thick, draw=CORAcolorPurple!85!black, fill=CORAcolorPurple!20!white},
flow arrow/.style={-{Stealth[length=0.75mm, width=0.75mm]}, color=black, point meta={sqrt((\thisrow{u})^2+(\thisrow{v})^2)},quiver={u=\thisrow{u}*0.1, v=\thisrow{v}*0.1,every arrow/.append style={-{Stealth[scale=\pgfplotspointmetatransformed/3000+0.25]}}}},
flow arrow legend/.style={-{Stealth[length=0.75mm, width=0.75mm]}, color=black, quiver={u=\thisrow{u}*0.1, v=\thisrow{v}*0.1,every arrow/.append style={-{Stealth[length=0.75mm, width=0.75mm]}}}},
surface plot/.style={surf, shader=flat, fill opacity=0.2, colormap name=whitegray, mesh/rows=10, draw opacity=0,forget plot},
colormap={mymap}{
    rgb(0pt)=(0.9769,0.9839,0.0805)
    rgb(31pt)=(0.9923,0.8034,0.1939)
    rgb(63pt)=(0.7945,0.7554,0.1570)
    rgb(95pt)=(0.3671,0.8021,0.4563)
    rgb(127pt)=(0.0770,0.7468,0.7224)
    rgb(159pt)=(0.1288,0.6408,0.891)
    rgb(191pt)=(0.2006,0.4803,0.9906)
    rgb(223pt)=(0.2803,0.3367,0.967)
    rgb(255pt)=(0.2422,0.1504,0.6603)
  },
colormap={whitegray}{
    rgb(0pt)=(0.2,0.2,0.2)
    rgb(63pt)=(0.5,0.5,0.5)
    rgb(127pt)=(0.7,0.7,0.7)
    rgb(191pt)=(0.9,0.9,0.9)
    rgb(255pt)=(1.0,1.0,1.0)
  }
}

\begin{document}
\title{Set-Based Training of Neural Barrier Certificates for Safety Verification of Dynamical Systems}

\author{Miriam Kranzlmüller$^*$, Lukas Koller$^*$, Tobias Ladner, and Matthias Althoff
\thanks{$^*$Equal contribution.}
\thanks{Submitted for review on \today. This work was partially supported by the project SPP-2422 (No. 500936349) and the project FAI (No. 286525601), both funded by the German Research Foundation (Deutsche Forschungsgesellschaft, DFG) as well as by the project "Next Generation AI Computing (gAIn)," funded by the Bavarian Ministry of Science and the Arts and the Saxon Ministry for Science, Culture, and Tourism.}
\thanks{All authors are with the Technical University of Munich, Germany. Miriam Kranzlmüller is now with Ludwig-Maximilians-Universität in Munich, Germany (email: miriam.kranzlmueller@math.lmu.de, lukas.koller@tum.de, tobias.ladner@tum.de, althoff@tum.de).}
}

\maketitle

\begin{abstract}
Barrier certificates are scalar functions over the state space of dynamical systems that separate all unsafe states from all reachable states. 
The existence of a barrier certificate formally verifies the safety of the dynamical system.
Recent approaches synthesize barrier certificates by iteratively training a neural network. 
In each iteration, the candidate is formally verified---if successful, the barrier certificate is found.
Instead, we propose a set-based training approach that tightly integrates verification into training via a set-based loss function that soundly encodes all barrier certificate properties.
A loss of zero formally proves the validity of the barrier certificate, collapsing the iterative training and verification into a single training procedure.
Our experiments demonstrate that our set-based training approach scales well with the system dimension and naturally handles complex nonlinear dynamics.
\end{abstract}

\begin{IEEEkeywords}
Barrier certificate, continuous dynamical system,
formal methods, safety verification, set-based computing, reachability analysis, adversarial training.
\end{IEEEkeywords}

\section{Introduction}
\begin{figure*}
  \centering
  \includetikz[noexport]{./figures/teaser/teaser-figure}
  \caption{Overview: (a) Prior work trains a candidate barrier certificate~$\BarrierNN$ using samples of the state space. The validity of the barrier certificate is verified using a separate verification step. If a counterexample is found, $\BarrierNN$ has to be retrained, resulting in repeated training-verification cycles. (b) In contrast, we use set-based training to integrate the verification directly into the training step, i.e., a loss value of zero directly proves the validity of the barrier certificate.}
  \label{fig:teaser}
\end{figure*}
Autonomous agents operating in safety-critical environments must be formally verified to guarantee safe behavior, e.g., self-driving vehicles~\cite{pek_et_al_2020,althoff_john_2014}, unmanned aerial vehicles~\cite{barry2012}, and robotic manipulators~\cite{garg2024}.
A dynamical system is safe if no trajectory starting from a set of initial states enters a set of unsafe states.
Existing approaches for safety verification of dynamical systems include simulation-based methods~\cite{girard2006}, reachability analysis~\cite{althoff2010}, and barrier certificates~\cite{prajna2006}.
Simulation-based methods have exponential runtime, as the number of required simulations grows exponentially with the system dimension~\cite{girard2006}.
Reachability analysis avoids exponential runtime by enclosing all reachable states using set-based computations.

In this work, we focus on barrier certificates~\cite[Thm.~1]{prajna2004}, which are scalar functions that separate all unsafe states from all reachable states via their zero-level set.
A barrier certificate is characterized by three properties:
\begin{enumerate*}
  \item[\firstprop] unsafe states map to positive values,
  \item[\secondprop] initial states map to non-positive values, and
  \item[\thirdprop] the trajectories at the zero-level set are directed toward the safe region.
\end{enumerate*}
Barrier certificates have been used to prove safety for autonomous and robotics systems~\cite{garg2024}, reinforcement learning~\cite{zhao2022reinforcement}, unmanned aerial vehicles~\cite{barry2012}, and quantum systems~\cite{lewis2023}.

Recent works use an iterative training and verification cycle to synthesize neural networks as barrier certificates~\cite{zhao2020,peruffo2021,zhao2023}, which is visualized in~\cref{fig:teaser}(a).
First, a neural network is trained based on samples of the state space; Second, to ensure safety, a costly verification step is required to verify the properties~\firstprop{} -- \thirdprop{}.
If the properties cannot be verified, the neural network enters a new training and verification cycle.
In contrast, we propose a novel set-based training approach that integrates the verification into the neural network training (\cref{fig:teaser}(b)). 
To this end, the training is augmented using conservative set propagation to compute a sound loss value that directly proves the satisfaction of the required properties, i.e., the safety of the dynamical system is verified once the loss reaches zero. Thereby, we can replace the costly training and verification cycle with a single set-based training procedure.

To summarize, our main contributions are:
\begin{itemize} 
  \item A novel approach to synthesize barrier certificates in a single set-based training procedure, eliminating the training-verification cycle.
  \item A sound set-based loss that encodes the properties of a barrier certificate.
  \item An iterative refinement procedure to dynamically enclose the zero-level set and a sound enclosure of the Lie derivative required for property~\thirdprop{}.
  \item An extensive evaluation on diverse benchmarks with linear, polynomial, and nonpolynomial dynamics. Moreover, we include ablation studies to justify our design choices.
\end{itemize}
\section{Preliminaries}
In this section, we provide the required preliminaries for our set-based training approach.

\subsection{Notation}
Lowercase letters denote vectors, and for a vector $x \in \R^n$, $x_{(i)}$ represents the $i$-th entry.
The vectors of all zeros and all ones are written as $\0$ and $\1$.
The set of all natural numbers up to $n\in\N$ is denoted as $[n] = \{1,\ldots,n\}$.
Matrices are written as uppercase letters.
For a matrix $A \in \R^{n\times m}$, $A_{(i,j)}$ denotes the entry in the $i$-th row and $j$-th column, $A_{(i,\cdot)}$ the $i$-th row and $A_{(\cdot,j)}$ the $j$-th column.
For two matrices $A \in \R^{n\times m_1}$ and $B \in\R^{n \times m_2}$, $[A\;B] \in \R^{n \times (m_1+m_2)}$ describes their horizontal concatenation.
Calligraphic uppercase letters denote sets.
Given a set $\contSet{S}\subseteq\R^n$ and a function $f\colon \R^n\rightarrow\R^m$, the image is written as $f(\contSet{S})\coloneqq \{f(s) \;\vert\; s\in\contSet{S}\}$.
The set of all compact sets over $\R^n$ is denoted by $\mathscr{K}(\R^n)$.

\subsection{Dynamic Systems and Barrier Certificates}
We consider continuous-time dynamical systems that are modeled as ordinary differential equations and operate within a continuous state space $\mathcal{X}\subset\R^n$~\cite[Sec.~2]{peruffo2021}:
\begin{equation}
  \label{eq:cont_sys}
  \dot{x}(t) = f(x),
\end{equation}
where $f\colon \X \to \R^n$ is a continuous vector field.
We want to formally verify the safety of such a dynamical system by
providing a guarantee that, starting from any initial state in $\Xinit$, the system never enters a set of unsafe states $\Xu\subseteq\X$.
To prove this, we require the Lie derivative.
\begin{definition}[Lie Derivative, {\cite[Def. 2.2]{zhao2020}}]
  \label{def:lieDeriv}
  Let $f\colon \X \to \R^n$ be a vector field. The Lie derivative of a continuously differentiable function $\BarrierNN\colon \R^n \to \R$ w.r.t. $f$, $\lie_f \BarrierNN(x)\colon \X \to \R$ is
  \begin{equation*}
    \lie_f \BarrierNN(x) \coloneqq \sum_{i=1}^n \left( \frac{\partial \BarrierNN}{\partial x_{(i)}} (x) \cdot f_{(i)}(x)\right)\text{.} 
  \end{equation*}
\end{definition}

A barrier certificate is a scalar function over the state space, which separates unsafe states and reachable states along its zero-level set.
The following theorem characterizes barrier certificates to prove the safety of continuous-time dynamical systems.
\begin{theorem}[Barrier Certificates, {\cite[Thm.~2.3]{zhao2020}}]
  \label{thm:barrier}
  The system in \eqref{eq:cont_sys} is safe, i.e., there is no trajectory starting in $\Xinit$ that reaches any state in $\mathcal{X}_U$,
  if there exists a differentiable function $B\colon\mathcal{X}\to \mathbb{R}$ satisfying the following conditions:
  \begin{align}
    \label{thm:barrierProp1}
    \forall x \in \Xu\colon    & B(x) > 0\text{,}\tag{\firstprop}        \\
    \label{thm:barrierProp2}
    \forall x \in \Xinit\colon & B(x) \leq 0\text{,}\tag{\secondprop}    \\
    \label{thm:barrierProp3}
    \forall x\in \Xzero\colon  & \lie_f B(x) < 0\text{,}\tag{\thirdprop}
  \end{align}
  where $\Xzero \coloneqq \{ x \in \X\mid B(x) = 0\}$ is the zero-level set.
\end{theorem}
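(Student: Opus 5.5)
We argue by contradiction. Suppose a trajectory $x(\cdot)$ of \eqref{eq:cont_sys} starts at $x(0)\in\Xinit$ and reaches $x(T)\in\Xu$ for some $T>0$. We follow the scalar function $\beta(t)\coloneqq B(x(t))$ along this trajectory and show that it can never cross from non-positive to positive values.

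First we collect what the hypotheses give at the endpoints. By \eqref{thm:barrierProp2}, $\beta(0)\le 0$. By \eqref{thm:barrierProp1}, $\beta(T)>0$. The trajectory is $C^1$ because $f$ is continuous, and we assume it stays in $\X$ on $[0,T]$. Since $B$ is differentiable, the chain rule gives that $\beta$ is differentiable with
\begin{equation*}
\beta'(t)=\nabla B(x(t))^\top f(x(t))=\lie_f B(x(t))
\end{equation*}
by \cref{def:lieDeriv}.

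Next we locate the last zero crossing. Let $t^\ast\coloneqq\sup\{t\in[0,T]\mid \beta(t)\le 0\}$. This set is nonempty because it contains $0$, and it is closed because $\beta$ is continuous, so $\beta(t^\ast)\le 0$. Also $t^\ast<T$ since $\beta(T)>0$, and $\beta(t)>0$ for all $t\in(t^\ast,T]$. If $\beta(t^\ast)<0$, continuity would give $\beta<0$ on a right neighbourhood of $t^\ast$, contradicting the definition of $t^\ast$. Hence $\beta(t^\ast)=0$, i.e., $x(t^\ast)\in\Xzero$.

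Now we apply \eqref{thm:barrierProp3} at this point: $\beta'(t^\ast)=\lie_f B(x(t^\ast))<0$. By the definition of the derivative, $\beta(t)<\beta(t^\ast)=0$ for all $t>t^\ast$ sufficiently close to $t^\ast$. This contradicts $\beta>0$ on $(t^\ast,T]$. Therefore no such trajectory exists, and the system is safe.

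The main obstacle is the crossing argument at the zero-level set. Condition \eqref{thm:barrierProp3} only constrains the derivative on $\Xzero$, so we cannot use a global monotonicity argument. The step that makes the proof work is choosing the \emph{last} time $t^\ast$ with $\beta\le 0$, not the first time $\beta$ becomes positive, because the latter need not be attained. Only a pointwise derivative sign at $t^\ast$ is then needed. A second, minor concern is well-posedness. Existence of solutions follows from continuity of $f$ (Peano), uniqueness is not required since the argument applies to any solution, and we assume trajectories remain in $\X$, as is implicit in the setting of \eqref{eq:cont_sys}.
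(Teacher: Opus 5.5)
Your proof is correct. The paper gives no proof of this theorem: it imports the result from \cite[Thm.~2.3]{zhao2020} and uses it only as a black box in the proof of \cref{thm:soundness-loss}, so there is no in-paper argument to compare against. Your last-crossing argument is the standard proof of this result. You set $t^\ast=\sup\{t\in[0,T]\mid B(x(t))\le 0\}$, show $B(x(t^\ast))=0$ so that $x(t^\ast)\in\Xzero$, and then contradict \thirdprop{} via the sign of the one-sided derivative of $B(x(\cdot))$ at $t^\ast$. You are right that choosing the last time with $B\le 0$, rather than a first time with $B>0$, is what makes the argument go through. You are also right that the strictness of \thirdprop{} is essential.

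Two small remarks. First, you restrict to $T>0$, so dismiss $T=0$ explicitly. That case is immediate, because \firstprop{} and \secondprop{} together force $\Xinit\cap\Xu=\emptyset$. Second, you do not need Peano's theorem, which would anyway need care at the boundary of a non-open $\X$ such as a box. Safety is a statement about every solution that exists, and every such solution lies in $\X$ automatically because $f$ is only defined on $\X$. Your argument uses only differentiability of $B$ along the trajectory (not $C^1$) and never uses uniqueness of solutions. That is exactly the generality the statement requires.
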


\subsection{Set-Based Computing}
As mentioned previously, we utilize set-based computations to simultaneously train and verify a neural network.
The simplest set representation we use is an $n$-dimensional interval $\I \subset \R^n$ with bounds $l,u \in \R^n$, denoted by $\I \coloneqq \left[ l, u \right]$ with $\forall i \in [n]\colon l_{(i)} \leq x_{(i)} \leq u_{(i)}$.
We also require zonotopes, which are projections of a higher dimensional unit hypercube.
\begin{definition}[Zonotope {\cite[Def. 7.13]{ziegler1995}}]
  \label{def:zonotope}
  Given a center $c \in \R^n$ and a generator matrix $G \in \R^{n\times q}$, a zonotope $\Z \subset \R^n$ is defined as
  \begin{equation*}
    \Z \coloneqq \shortZ{c}{G}= \left\{ c+G \beta \mid \beta  \in [-1,1]^q \right\}\text{.}
  \end{equation*}
\end{definition}
Given a zonotope $\Z_1 = \shortZ{c_1}{G_1} \subset \R^n$, a matrix $W \in \R^{m\times n}$ and a vector $b \in \R^m$, the affine map is computed as~\cite[Lem.~2]{kuehn1998}
\begin{equation}
  \label{eq:affineMap}
  W \Z_1 +b = \{W z_1 + b \mid z_1 \in \Z_1 \} = \shortZ{Wc_1+b}{WG_1}\text{.}
\end{equation}
The Minkowski sum of $\Z_1$ and zonotope $\Z_2 = \shortZ{c_2}{G_2} \subset \R^n$ is computed as~\cite[Lem.~2]{kuehn1998}
\begin{align}
  \begin{split}
    \Z_1 \oplus \Z_2 &\coloneqq \{z_1+z_2 \mid z_1 \in \Z_1,\,z_2 \in \Z_2\} \\ & = \shortZ{c_1+c_2}{[G_1\;G_2]}\subset \R^n.
  \end{split}
\end{align}
We enclose the inner product of zonotopes as~\cite[Thm. 1]{althoff2013}
\begin{align}
  \label{eq:zonoproduct}
  \begin{split}
    \Z_1^\top \cdot \Z_2 & \coloneqq\{x_1^\top x_2 \mid x_1\in\Z_1, x_2\in \Z_2\} \subseteq \shortZ{c'}{G'}\text{,} \\
    c' & = c_1^\top c_2\text{,} \\ G' & = [c_1^\top G_2,\;G_1^\top c_2,\;G_{1(\cdot,1)}^\top G_2,\;\ldots\;G_{1(\cdot,q_1)}^\top G_2]\text{.}
  \end{split}
\end{align}

The factors~$\beta$ of a zonotope~(\cref{def:zonotope}) can be constrained to represent arbitrary convex polytopes.
\begin{definition} [Constrained Zonotope, {\cite[Def. 3]{scott2016}}]
  \label{def:conZono}
  Given a zonotope $\Z=\shortZ{c}{G}$ with $c \in \R^n$ and $G \in \R^{n\times q}$, and $A \in \R^{p \times q}$, $b \in \R^p$, a constrained zonotope\footnote{In contrast to \cite[Def. 3]{scott2016}, we define the constraint via an inequality for convenience.} is defined as
  \begin{align*}
    \cZ{A}{b} & \coloneqq \shortcZ{c}{G}{A}{b}                                                \\
              & = \left\{ c+G \beta \mid \beta  \in [-1,1]^q,\,A\beta \leq b \right\}\text{.}
  \end{align*}
\end{definition}

\subsection{Neural Networks and Set-Based Training}\label{sec:nntraining}

A feed-forward neural network $\NN\colon\R^{\numNeurons_0} \to \R^{\numNeurons_\numLayers}$ is a list of $\numLayers \in \N$ alternating linear and nonlinear layers.
\begin{definition}[Neural Network, {\cite[Sec.~5.1]{bishop2006}}]\label{def:neural_network}
  For an input $\nnInput\in\R^{\numNeurons_0}$, the output of each layer $\nnHidden_k$, for $k\in[\numLayers]$, and the output of a neural network $\nnOutput = \NN(\nnInput)\in\R^{\numNeurons_\numLayers}$ are
  \begin{align*}
    \nnHidden_0 & = \nnInput\text{,}                                               \\
    \nnHidden_k & = \begin{cases}
                      W_k\,\nnHidden_{k-1} + b_k & \text{if $k$-th layer is linear,} \\
                      \actfun_k(\nnHidden_{k-1}) & \text{otherwise,}
                    \end{cases} \\
    \nnOutput   & = \nnHidden_\numLayers\text{,}
  \end{align*}
  with weight matrices~$W_k \in \R^{\numNeurons_k \times \numNeurons_{k-1}}$, bias vectors~$b_k \in \R^{\numNeurons_k}$, and element-wise activation functions~$\nnActFun_k$. The parameter vector~$\nnParams$ of the neural network contains all weights and biases.
\end{definition}

\subsubsection{Set-Based Forward Propagation}
To compute the set-based loss function required for our approach, we conservatively propagate zonotopes through all layers of the neural network.
The image of a linear layer is computed by applying an affine map~\eqref{eq:affineMap}.
The image of nonlinear layers must be enclosed, as zonotopes are not closed under nonlinear functions.
\begin{proposition}
  [Set-Based Forward Prop., {\cite[Sec. 2.3]{ladner2023}}]
  \label{def:setbasedForwardProp}
  Given a neural network $\NN\colon\R^{\numNeurons_0} \to \R^{\numNeurons_\numLayers}$ and an input set $\nnInputSet \subset \R^{\numNeurons_0}$, an enclosure $\nnOutputSet\subset\R^{\numNeurons_\numLayers}$ of the output set $\nnOutputSetExact = \NN(\nnInputSet)$, i.e., $\nnOutputSetExact\subseteq\nnOutputSet$, is computed as
  \begin{align*}
    \nnHiddenSet_0 & = \nnInputSet,                                                                         \\
    \nnHiddenSet_k & =  \begin{cases}
                          W_k \nnHiddenSet_{k-1} + b_k                & \text{if $k$-th layer is linear,} \\
                          \opEnclose{\nnActFun_k}{\nnHiddenSet_{k-1}} & \text{otherwise,}
                        \end{cases} \\
    \nnOutputSet   & = \nnHiddenSet_\numLayers\text{,}
  \end{align*}
  where $k\in[\numLayers]$ and $\opEnclose{\nnActFun_k}{\nnHiddenSet_{k-1}}\supseteq\nnActFun_k(\nnHiddenSet_{k-1})$ encloses the output set of the $k$-th nonlinear layer. 

\end{proposition}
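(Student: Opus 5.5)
We argue by induction over the layer index~$k$, with the invariant
\begin{equation*}
  \NN_k(\nnInputSet) \subseteq \nnHiddenSet_k,
\end{equation*}
where $\NN_k$ is the composition of the first $k$ layers, i.e., the map $\nnInput \mapsto \nnHidden_k$ from \cref{def:neural_network}. Since $\nnOutput = \nnHidden_\numLayers$ and $\nnOutputSet = \nnHiddenSet_\numLayers$, the case $k=\numLayers$ gives exactly $\nnOutputSetExact = \NN(\nnInputSet) \subseteq \nnOutputSet$.

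The base case $k=0$ is immediate: $\NN_0$ is the identity and $\nnHiddenSet_0 = \nnInputSet$.

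For the inductive step, assume $\NN_{k-1}(\nnInputSet)\subseteq\nnHiddenSet_{k-1}$ and take any $\nnInput\in\nnInputSet$. Then $\nnHidden_{k-1} = \NN_{k-1}(\nnInput) \in \nnHiddenSet_{k-1}$. We split into the two cases of \cref{def:neural_network}.
\begin{itemize}
  \item If the $k$-th layer is linear, then $\nnHidden_k = W_k\nnHidden_{k-1}+b_k$. This lies in $W_k\nnHiddenSet_{k-1}+b_k$ by the definition of the affine image in \eqref{eq:affineMap}. For zonotopes this set is computed exactly as $\shortZ{W_kc+b_k}{W_kG}$, so no overapproximation occurs at this step.
  \item If the $k$-th layer is nonlinear, then $\nnHidden_k = \nnActFun_k(\nnHidden_{k-1}) \in \nnActFun_k(\nnHiddenSet_{k-1})$. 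By the assumed property of the enclosure operator, $\nnActFun_k(\nnHiddenSet_{k-1}) \subseteq \opEnclose{\nnActFun_k}{\nnHiddenSet_{k-1}} = \nnHiddenSet_k$.
\end{itemize}
In both cases $\nnHidden_k\in\nnHiddenSet_k$. Since $\nnInput$ was arbitrary, $\NN_k(\nnInputSet)\subseteq\nnHiddenSet_k$, which closes the induction.

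The argument uses only monotonicity of images under set inclusion: if $\contSet{A}\subseteq\contSet{B}$ then $g(\contSet{A})\subseteq g(\contSet{B})$. This is what lets the per-layer guarantees chain into a guarantee for the whole network.

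The only nontrivial ingredient is the existence of a sound enclosure $\opEnclose{\nnActFun_k}{\cdot}$ for element-wise activations. The statement takes this as a hypothesis, so the proof does not depend on it. To make it concrete for zonotopes, we would follow \cite{ladner2023}. For each neuron $i$, we bound the input by an interval $[l_{(i)},u_{(i)}]$, computed as center plus or minus the absolute row sums of the generator matrix. We then take a linear approximation $m_{(i)}x+t_{(i)}$ of the activation on that interval, with maximum error $d_{(i)}$. The enclosure is the affine map of the zonotope under these per-neuron linear approximations, Minkowski-summed with the box zonotope $\shortZ{\0}{\mathrm{diag}(d)}$. Soundness follows because the linear approximation plus the error term bounds the activation pointwise on the interval, which contains the projection of the set. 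Computing the error bound $d$ correctly is the main technical step, but it lies outside the proposition proper.
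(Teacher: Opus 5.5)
Your layer-wise induction, with the invariant $\NN_k(\nnInputSet)\subseteq\nnHiddenSet_k$, exact affine images for linear layers and the assumed enclosure for nonlinear layers, is correct and is essentially the standard argument. The paper gives no proof of its own and defers to \cite[Sec.~2.3]{ladner2023}, which rests on this same layer-by-layer composition, and your concrete linear-approximation-plus-error-box enclosure is a valid order-one instance of the approximation-and-error-bound construction used there.
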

We write $\opEnclose{\NN}{\nnInputSet}=\nnOutputSet$ to denote the set propagation through the entire neural network.
\subsubsection{Set-Based Training}
Standard loss functions cannot provide safety guarantees, because they evaluate the neural network at finitely many output points.
Therefore, we use set-based training, which generalizes standard neural network training by integrating the set propagation into the training process~\cite{koller2025}.
Using the set propagation during training, we can define set-based loss functions, i.e., loss functions defined for entire output sets.
Intuitively, the goal of set-based neural network training is to minimize a set-based loss function $\setLoss \colon \mathscr{K}(\R^{\numNeurons_\numLayers})\times\R^{\numNeurons_\numLayers} \to \R$ over continuous output sets of the neural network:
\begin{equation}\label{eq:set_based_training_objective}
  \min_\nnParams \sum_{\substack{(\nnInputSet,\nnTarget)\sim\mathcal{D} \\ \nnOutputSet=\opEnclose{\NN_\nnParams}{\nnInputSet}}}\setLoss(\nnOutputSet,\nnTarget)\text{,}
\end{equation}
where $\mathcal{D}$ is a dataset containing input sets~$\nnInputSet\subset\R^{\numNeurons_0}$ and targets~$\nnTarget\in\R^{\numNeurons_\numLayers}$.

\subsection{Problem Statement}
Given a continuous-time dynamical system $\dot{x}(t) = f(x)$ with state space $\mathcal{X}\subset \mathbb{R}^n$, an initial set $\mathcal{X}_I \subseteq \mathcal{X}$, and an unsafe set $\mathcal{X}_U \subseteq \mathcal{X}$, we want to formally prove safety by synthesizing a barrier certificate~$B\colon \R^n\rightarrow \R$ using set-based training.
\section{Set-based Training of Barrier Certificates}
\label{sec:set-basedBC}
\begin{figure*}
  \centering
  \includetikz[export]{./figures/experiments-groupplot-evolution/evolution-groupplot}
  \caption{Visualization of the neural barrier certificate at different stages of the training process.}
  \label{fig:barrierlearning}
\end{figure*}
Our novel approach uses set-based training to synthesize barrier certificates without requiring a subsequent verification step. The set-based loss function directly proves the validity of the barrier certificate.

\subsection{Sound Set-Based Loss Function}
\label{sec:setBasedLoss}

In this subsection, we define a set-based loss function that can verify the validity of a barrier certificate.
For each \firstprop{} -- \thirdprop{}, we define a set-based loss function that can verify the corresponding property.

\firstprop{} requires all unsafe states $\Xu$ to be mapped to positive values by $\BarrierNN$.
We enclose the image of $\Xu$ using~\cref{def:setbasedForwardProp},
\begin{align}
  \Yu = \opEnclose{\BarrierNN}{\Xu} \supseteq \BarrierNN(\Xu)\text{,}
\end{align}
and the loss function penalizes the lower bound $\buLower$ of $\Yu$, i.e., for $\Yu\subseteq\shortI{\buLower}{\buUpper}$ and some small $\epsilon > 0$:
\begin{equation}
  \label{eq:lossI}
  \setLossI(\Yu) \coloneqq \max(0,-\buLower + \epsilon)\text{.}
\end{equation}
Intuitively, if $\setLossI(\Yu) = 0$, then $-\buLower + \epsilon \leq 0 \implies \buLower > 0$ and hence \firstprop{} is satisfied.
We require a small $\epsilon > 0$ to ensure a strict inequality.

\secondprop{} requires all initial states to be mapped to non-positive values.
Hence, we define the set-based loss analogously to $\setLossI$, i.e., for $\Yinit = \opEnclose{\BarrierNN}{\Xinit} \supseteq \BarrierNN(\Xinit)$ with bounds $\shortI{\binitLower}{\binitUpper}\supseteq\Yinit$, the loss is defined as
\begin{equation}
  \label{eq:lossII}
  \setLossII(\Yinit) \coloneqq \max(0,\binitUpper)\text{.}
\end{equation}
Again, if $\setLossII(\Yinit) = 0$, then $\binitUpper \leq 0$ and hence \secondprop{} is satisfied.

\thirdprop{} requires the Lie derivative of all zero-level points to be negative. The zero-level set $\Xzero$ is not known given $\X$ and $\BarrierNN$, and changes after each training loop as the parameters $\nnParams$ are updated.
Thus, we compute a zero-level set enclosure $\XzeroUnion$ from $\X$ in each training iteration; the details are deferred to~\cref{sec:zerolevelset}.

On a high-level, we enclose the zero-level set using a union of zonotopes:
\begin{equation}
  \bigcup_{i=1}^\numZsets \XzeroSub{i} \supseteq \Xzero\text{,}
\end{equation}
which we write as follows for convenience:
\begin{equation}
  \label{eq:Xzero-union}
  \XzeroUnion \coloneqq \{\XzeroSub{1},\ldots,\XzeroSub{\numZsets}\}.
\end{equation}
For each $\XzeroSub{i}\in\XzeroUnion$, we enclose the Lie derivative:
\begin{equation}
  \label{eq:enclosureOfLieDeriv}
  \YzeroSub{i} = \opEnclose{\lie_f\BarrierNN}{\XzeroSub{i}} \supseteq \lie_f \BarrierNN(\XzeroSub{i})\text{.}
\end{equation}
Analogously, we collect these enclosures into a set:
\begin{equation}
  \label{eq:Yzero-union}
  \YzeroUnion \coloneqq \{\YzeroSub{1},\ldots,\YzeroSub{\numZsets}\}.
\end{equation}
Finally, we penalize the upper bound of each set, i.e., for $\YzeroSub{i} = \shortI{\bzeroLowerSub{i}}{\bzeroUpperSub{i}}$ and some small $\epsilon > 0$:

\begin{equation}
  \label{eq:lossIII}
  \setLossIII(\YzeroUnion) \coloneqq \sum_{i=1}^\numZsets\max(0,\bzeroUpperSub{i} + \epsilon)\text{.}
\end{equation}
Hence, if $\setLossIII(\YzeroUnion) = 0$, then for all $i\in[\numZsets]$: $\bzeroUpperSub{i}+\epsilon\leq 0$, i.e., $\bzeroUpperSub{i}\leq-\epsilon<0$, and hence \thirdprop{} is satisfied.
We require $\epsilon>0$ to enforce the strict inequality required by~\thirdprop{}.

Ultimately, the total set-based loss function sums the losses for three properties:
\begin{equation}
  \label{eq:total-loss}
  \setLoss(\Yu,\Yinit,\YzeroUnion) \coloneqq \setLossI(\Yu) + \setLossII(\Yinit) + \setLossIII(\YzeroUnion)\text{.}
\end{equation}
The set-based loss function~$\setLoss(\Yu,\Yinit,\YzeroUnion)$ encodes all three characteristic properties of a barrier certificate~(\cref{thm:barrier}). Therefore, if $\setLoss(\Yu,\Yinit,\YzeroUnion) = 0$, the safety of the dynamical system is verified.
\begin{restatable}[Soundness of Set-Based Loss]{theorem}{thmsoundnessloss}
  \label[theorem]{thm:soundness-loss}
  The system in \eqref{eq:cont_sys} is safe if
  \begin{equation*}
    \setLoss(\Yu,\Yinit,\YzeroUnion) = 0\text{.}
  \end{equation*}
\end{restatable}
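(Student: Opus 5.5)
The plan is to reduce the statement to \cref{thm:barrier} with $B = \BarrierNN$, by showing that a vanishing total loss forces each of \firstprop{}--\thirdprop{} to hold. First, each summand in \eqref{eq:total-loss} has the form $\max(0,\cdot)$, or a finite sum of such terms, and is therefore nonnegative. Hence $\setLoss(\Yu,\Yinit,\YzeroUnion)=0$ implies $\setLossI(\Yu)=\setLossII(\Yinit)=\setLossIII(\YzeroUnion)=0$, and in $\setLossIII$ every individual term $\max(0,\bzeroUpperSub{i}+\epsilon)$ vanishes.

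The three properties then follow one at a time, each by combining the loss bound with the enclosure guarantee of \cref{def:setbasedForwardProp}.

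For \firstprop{}, $\setLossI(\Yu)=0$ gives $\buLower\geq\epsilon>0$. Any $x\in\Xu$ satisfies $\BarrierNN(x)\in\BarrierNN(\Xu)\subseteq\Yu\subseteq\shortI{\buLower}{\buUpper}$, so $\BarrierNN(x)\geq\buLower>0$.

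For \secondprop{}, $\setLossII(\Yinit)=0$ gives $\binitUpper\leq 0$. Any $x\in\Xinit$ satisfies $\BarrierNN(x)\in\Yinit\subseteq\shortI{\binitLower}{\binitUpper}$, so $\BarrierNN(x)\leq 0$.

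For \thirdprop{}, take any $x\in\Xzero$. Because $\bigcup_{i=1}^\numZsets\XzeroSub{i}\supseteq\Xzero$, the point lies in some $\XzeroSub{i}$. Then \eqref{eq:enclosureOfLieDeriv} yields $\lie_f\BarrierNN(x)\in\YzeroSub{i}\subseteq\shortI{\bzeroLowerSub{i}}{\bzeroUpperSub{i}}$, and the vanishing $i$-th term gives $\lie_f\BarrierNN(x)\leq\bzeroUpperSub{i}\leq-\epsilon<0$.

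With all three conditions established, \cref{thm:barrier} gives safety of \eqref{eq:cont_sys}. Two side conditions must be checked. First, $\BarrierNN$ must be differentiable, which is also needed for the Lie derivative in \cref{def:lieDeriv} to exist. Second, if $\Xzero=\emptyset$ (for example $\numZsets=0$), \thirdprop{} holds vacuously.

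The main obstacle is not the inequality chasing but the soundness assumptions the argument depends on. The proof relies on two facts established elsewhere in the paper. The first is that the enclosure $\XzeroUnion$ computed in \cref{sec:zerolevelset} really covers $\Xzero$ for the \emph{current} parameters $\nnParams$. The second is that the operator in \eqref{eq:enclosureOfLieDeriv} is a sound over-approximation of the image of $\lie_f\BarrierNN$, which requires a sound enclosure of the gradient of the network and of $f$, combined with the product enclosure \eqref{eq:zonoproduct}. I would state both explicitly as hypotheses, citing the later constructions, rather than reprove them here. For differentiability, I would note that the activation functions are required to be smooth, as is already implicit in \cref{def:lieDeriv}.
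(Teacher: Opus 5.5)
Your proposal is correct and follows essentially the same route as the paper. Both arguments split the vanishing total loss into three vanishing nonnegative summands, read off $\buLower\geq\epsilon>0$, $\binitUpper\leq 0$, and $\bzeroUpperSub{i}\leq-\epsilon$, and transfer these bounds to $\BarrierNN$ through the enclosure guarantees before invoking \cref{thm:barrier}; you are slightly more explicit than the paper, which uses the covering $\bigcup_{i}\XzeroSub{i}\supseteq\Xzero$ and the differentiability of $\BarrierNN$ only implicitly.
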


\cref{fig:barrierlearning} illustrates the evolution of the barrier certificate throughout training.
Initially, none of the properties~\firstprop{} -- \thirdprop{} are satisfied~(\cref{fig:barrierlearning}a).
As training progresses, \firstprop{} and \secondprop{} are satisfied first, while parts of $\XzeroUnion$ still violate~\thirdprop{}~(\cref{fig:barrierlearning}b).
These violations are successively eliminated~(\cref{fig:barrierlearning}c) until \thirdprop{} holds for all sets in $\XzeroUnion$, formally verifying the system~(\cref{fig:barrierlearning}d).

Subsequently, we describe the details required for computing~$\setLossIII$, which requires enclosing the zero-level set~(\cref{sec:zerolevelset}) and enclosing the Lie derivative~(\cref{sec:lie-derivative}).

\subsection{Efficient Enclosure of the Zero-Level Set}
\label{sec:zerolevelset}

The zero-level set of a neural network is generally nonconvex and can be arbitrarily complex, as illustrated in~\cref{fig:barrierlearning}.
We therefore represent it as a union of zonotopes, computed via a recursive branch-and-bound strategy:
we split the input space and discard any region that provably contains no zero point.
To do so, we enclose the preimage of $\BarrierNN$ for the output value $0$ by exploiting the shared factor space between the input and output zonotopes~\cite{koller2025shadows}. The following proposition formalizes the preimage enclosure.
\begin{proposition}[Enclosing the Zero-Level Set, {\cite[Prop.~2]{koller2025shadows}}]\label{prop:input_set_refinement}
  Given are a neural network $\BarrierNN\colon\X\to\R$ and an input set~$\XzeroSub{i}=\shortZ{c_x}{G_x}\subset\R^n$, where $G_x\in\R^{n\times \numGens_0}$. We can enclose the zero-level set of $\BarrierNN$ by

  \begin{equation*}
    \BarrierNN^{-1}(\BarrierNN(\XzeroSub{i}) \cap \{0\}) \subseteq \cZ[\XzeroSub{i}]{\zonoConMat}{\zonoConOff}\text{,}
  \end{equation*}
  where $\nnOutputSet=\shortZ{c_y}{G_y}=\opEnclose{\BarrierNN}{\XzeroSub{i}}$ is the output enclosure and the constraints are defined as
  \begin{align*}
    \zonoConMat & \coloneqq \cmat{-G_{y(\cdot,[\numGens_0])} \\ G_{y(\cdot,[\numGens_0])}}\text{,} \\
    \zonoConOff & \coloneqq \cmat{c_y                        \\ -c_y} + \abs{\cmat{G_{y(\cdot,[\numGens_\numLayers]\setminus [\numGens_0])} \\ G_{y(\cdot,[\numGens_\numLayers]\setminus [\numGens_0])}}}\,\ones\text{.}
  \end{align*}
\end{proposition}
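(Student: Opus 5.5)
The plan is to show that every zero point of $\BarrierNN$ in $\XzeroSub{i}$ has a factor vector $\beta$ satisfying $\zonoConMat\beta\leq\zonoConOff$. The key ingredient is a \emph{shared factor} property of the set propagation in \cref{def:setbasedForwardProp}. Concretely, I would first establish the following. For every factor $\beta\in[-1,1]^{\numGens_0}$, the point $x=c_x+G_x\beta$ admits some $\gamma\in[-1,1]^{\numGens_\numLayers-\numGens_0}$ such that
\begin{equation*}
  \BarrierNN(x) = c_y + G_{y(\cdot,[\numGens_0])}\,\beta + G_{y(\cdot,[\numGens_\numLayers]\setminus[\numGens_0])}\,\gamma\text{.}
\end{equation*}
This says that the first $\numGens_0$ generators of the output zonotope are indexed by the same factors as the input generators.

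I would prove this by induction over the layers. The claim is that each hidden zonotope $\nnHiddenSet_k=\shortZ{c_k}{G_k}$ has the following property: for each $\beta$ there is a $\gamma_k$ with $\nnHidden_k(x)=c_k+G_{k(\cdot,[\numGens_0])}\beta+G_{k(\cdot,\text{rest})}\gamma_k$. The base case $k=0$ is trivial, with no extra generators.

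For a linear layer, \eqref{eq:affineMap} maps the center and all generators by $W_k$ and adds $b_k$. The same $\beta,\gamma_k$ therefore work, and the order of the generators is preserved.

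For a nonlinear layer, I would use the form of the image enclosure from \cite{ladner2023}. Each neuron's activation is enclosed by an affine approximation plus a bounded error: $\nnActFun(h)\in m\,h+t+[-e,e]$ for all $h$ in the neuron's input bounds. The enclosure therefore scales the existing generators by $m$, shifts the center by $t$, and appends new generators $e$ as extra columns. A value $\nnActFun(h)$ is then realized by the old factors together with a new error factor in $[-1,1]$. This preserves the first $\numGens_0$ columns as the $\beta$-part. I expect this step to be the main technical obstacle, because it requires stating precisely that the enclosure only \emph{appends} generators and never reorders or mixes them. I would make this an explicit assumption on $\opEnclose{\nnActFun_k}{\cdot}$, as is the case for the enclosures in \cite{ladner2023}.

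With the representation in hand, the rest is elementary. Let $x\in\XzeroSub{i}$ satisfy $\BarrierNN(x)=0$, write $x=c_x+G_x\beta$, and abbreviate $G=G_{y(\cdot,[\numGens_0])}$ and $H=G_{y(\cdot,[\numGens_\numLayers]\setminus[\numGens_0])}$. Then $0=c_y+G\beta+H\gamma$.

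The first inequality follows from $-G\beta=c_y+H\gamma\leq c_y+\abs{H}\ones$, using $\abs{\gamma_{(j)}}\leq1$. The second follows from $G\beta=-c_y-H\gamma\leq -c_y+\abs{H}\ones$. Stacking the two inequalities gives exactly $\zonoConMat\beta\leq\zonoConOff$. Since $\beta\in[-1,1]^{\numGens_0}$ as well, $x$ lies in $\cZ[\XzeroSub{i}]{\zonoConMat}{\zonoConOff}$, which proves the inclusion in \cref{prop:input_set_refinement}.
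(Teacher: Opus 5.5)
Your argument is correct, but it does not follow the paper's route. The paper's proof is a one-line reduction. It invokes the general preimage enclosure of \cite[Prop.~2]{koller2025shadows} for halfspace-constrained output sets $\{y\mid Ay\leq b\}$ and instantiates it with $\{0\}=\{y\mid [1\ -1]^\top y\leq \0\}$, so that $\zonoConMat$ and $\zonoConOff$ are read off from that general formula.

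You instead re-derive the cited result from scratch for this special case. First you prove the shared-factor representation $\BarrierNN(c_x+G_x\beta)=c_y+G_{y(\cdot,[\numGens_0])}\beta+G_{y(\cdot,[\numGens_\numLayers]\setminus[\numGens_0])}\gamma$ with $\gamma\in[-1,1]^{\numGens_\numLayers-\numGens_0}$ by induction over the layers. Then you take the two one-sided bounds from $0=c_y+G\beta+H\gamma$ and $\abs{\gamma_{(j)}}\leq 1$, which stack exactly to $\zonoConMat\beta\leq\zonoConOff$.

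The paper's version is shorter. It also shows that the zero-level set is just one instance of a preimage enclosure for arbitrary polyhedral output specifications. Your version is self-contained and shows where each row of $\zonoConMat$ and $\zonoConOff$ comes from.

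More importantly, your version makes explicit a hypothesis the proposition leaves implicit. The mere superset property of the set-based forward propagation is not enough. The enclosure of each nonlinear layer must keep the first $\numGens_0$ generator columns tied to the input factors and account for all approximation error through appended generators with fresh factors in $[-1,1]$, as the linear-approximation enclosures of \cite{ladner2023} do. If, for example, generator order reduction were applied during propagation, the argument breaks down and the inclusion need not hold.

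You also read the left-hand side correctly as $\{x\in\XzeroSub{i}\mid\BarrierNN(x)=0\}$. Taken literally over all of $\X$, the preimage could contain zeros outside $\XzeroSub{i}$, which no subset of $\XzeroSub{i}$ can contain.
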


\cref{fig:zerolevelsets} illustrates the zero-level set enclosure using an example: starting from the full state space~$\X$, four iterations of preimage enclosure and splitting yield a tight union of zonotopes enclosing the nonconvex zero-level set.

\begin{figure}
  \centering
  \includetikz[export]{./figures/zerolevelsets}
  \caption{We enclose the zero-level set by recursively splitting the state space and enclosing the preimage enclosure to discard input regions that do not contain a zero point. After four iterations we achieve a tight enclosure of the zero-level set.}
  \label{fig:zerolevelsets}
\end{figure}

\subsection{Enclosure of the Lie Derivative}
\label{sec:lie-derivative}

For the set-based loss of~\thirdprop{}, we enclose the Lie derivative for each~$\XzeroSub{i}\in\XzeroUnion$.
The Lie derivative~(\cref{def:lieDeriv}) is the inner product of the gradient of~$\BarrierNN$ and the flow~$f$, and we over-approximate it by first computing the set of gradients followed by enclosing the flow $f$.
\begin{restatable}[Enclosing Sets of Gradients]{proposition}{propenclosingsetbackprop}
  \label{prop:enclosing-setbackprop}
  Given a neural network $\BarrierNN\colon\X\to\R$ and an input set~$\XzeroSub{i}\subset\R^n$, an enclosure of the set of input gradients~$\nnGradSet_{\XzeroSub{i}}\supseteq\{\nabla_{\nnInput}\BarrierNN(\nnInput)\mid \nnInput\in\XzeroSub{i}\}$ is computed for $k=\numLayers,\dots,1$
  \begin{align*}
    \nnGradSet_\numLayers & = \shortZ{1}{\0},                                                                                       \\
    \nnGradSet_{k-1}      & = \begin{cases}
                                W_{k}^\top \nnGradSet_k                                       & \text{if $k$-th layer is linear,} \\
                                \opEnclose{\nnActFun_k'}{\nnHiddenSet_{k-1}}\cdot\nnGradSet_k & \text{otherwise.}
                              \end{cases}
  \end{align*}
  Finally, $\nnGradSet_{\XzeroSub{i}}\coloneqq \nnGradSet_{0}\supseteq\{\nabla_{\nnInput}\BarrierNN(\nnInput)\mid \nnInput\in\XzeroSub{i}\}$.
\end{restatable}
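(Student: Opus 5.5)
We prove the statement by backward induction over the layers, mirroring the chain rule that standard backpropagation implements. For a fixed input $\nnInput\in\XzeroSub{i}$, let $\nnHidden_k(\nnInput)$ denote the hidden values from \cref{def:neural_network}. Define the exact backward gradients by $g_\numLayers = 1$ and
\begin{equation*}
  g_{k-1} = \begin{cases} W_k^\top g_k & \text{if the $k$-th layer is linear,}\\ \nnActFun_k'(\nnHidden_{k-1}(\nnInput))\odot g_k & \text{otherwise,}\end{cases}
\end{equation*}
where $\odot$ is the element-wise product, since $\nnActFun_k$ acts element-wise and its Jacobian is $\mathrm{diag}(\nnActFun_k'(\nnHidden_{k-1}))$. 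The chain rule gives $g_0 = \nabla_{\nnInput}\BarrierNN(\nnInput)$, because the network is a composition of the layer maps and $\BarrierNN$ has scalar output. The claim therefore reduces to showing $g_k\in\nnGradSet_k$ for every $k$ and every $\nnInput\in\XzeroSub{i}$.

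We fix an arbitrary $\nnInput\in\XzeroSub{i}$ and induct downward from $k=\numLayers$. The base case holds since $\nnGradSet_\numLayers=\shortZ{1}{\0}=\{1\}\ni g_\numLayers$. For the inductive step, assume $g_k\in\nnGradSet_k$.
\begin{itemize}
  \item If layer $k$ is linear, then by the affine map formula \eqref{eq:affineMap} with zero bias, $W_k^\top\nnGradSet_k=\{W_k^\top z\mid z\in\nnGradSet_k\}$. This set contains $W_k^\top g_k=g_{k-1}$.
  \item If layer $k$ is nonlinear, \cref{def:setbasedForwardProp} gives $\nnHidden_{k-1}(\nnInput)\in\nnHiddenSet_{k-1}$, since the forward enclosures contain every layer-wise image of points of $\XzeroSub{i}$; this is part of the soundness of set-based forward propagation, applied to the truncated network. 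Hence $\nnActFun_k'(\nnHidden_{k-1}(\nnInput))\in\nnActFun_k'(\nnHiddenSet_{k-1})\subseteq\opEnclose{\nnActFun_k'}{\nnHiddenSet_{k-1}}$. Since $g_k\in\nnGradSet_k$, the element-wise product $g_{k-1}$ lies in the exact product set $\{a\odot b\mid a\in\opEnclose{\nnActFun_k'}{\nnHiddenSet_{k-1}},\,b\in\nnGradSet_k\}$. The operation ``$\cdot$'' in the proposition must be a sound enclosure of this set. It can be obtained componentwise from the zonotope product enclosure \eqref{eq:zonoproduct}, applying it to one-dimensional projections, since the $j$-th entry is the scalar product $a_{(j)}b_{(j)}$, and then stacking the results.
\end{itemize}
This completes the induction, and at $k=0$ we get $\nabla_{\nnInput}\BarrierNN(\nnInput)=g_0\in\nnGradSet_0$. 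Since $\nnInput\in\XzeroSub{i}$ was arbitrary, the claim follows.

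The main obstacle is the nonlinear-layer step, because there the product couples two sets that both depend on $\nnInput$. The argument only needs membership of the specific pair $(\nnActFun_k'(\nnHidden_{k-1}(\nnInput)),g_k)$ in the Cartesian product of the two enclosures. Dependency between the factors is thus simply discarded, which is sound but conservative. What has to be stated carefully is that $\opEnclose{\nnActFun_k'}{\cdot}$ is a valid enclosure of the derivative image, which requires $\nnActFun_k$ to be differentiable, and that the element-wise product enclosure is sound. To make the latter rigorous, I would note that stacking per-coordinate zonotope enclosures that share generators still yields a superset of the exact element-wise product set.
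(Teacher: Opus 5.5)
Your proof is correct and follows essentially the same route as the paper. Both run a backward induction over the layers using the chain rule, handle a linear layer with the affine map $W_k^\top\nnGradSet_k$, and handle a nonlinear layer by enclosing $\nnActFun_k'$ and applying a sound product enclosure. Your pointwise formulation, which fixes $\nnInput$ and tracks its exact backpropagated vector, is slightly more careful than the paper's set-level induction hypothesis: it makes explicit that $\nnHidden_{k-1}(\nnInput)\in\nnHiddenSet_{k-1}$, and that the nonlinear step needs an element-wise product enclosure rather than an inner-product one, both of which the paper's appeal to \eqref{eq:zonoproduct} leaves implicit.
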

We enclose the flow $f$ for each zero-level set $\XzeroSub{i}$, i.e., $\opEnclose{f}{\XzeroSub{i}}\supseteq\{f(\nnInput)\mid \nnInput\in\XzeroSub{i}\}$: For linear systems, we can compute the flow exactly using an affine map~\eqref{eq:affineMap}; other dynamics have to be enclosed, e.g., using range bounding based on interval arithmetic~\cite[Sec. 2.3.3]{jaulin2001} or linearization~\cite{althoff_2008}.
Finally, we enclose the Lie derivative using~\eqref{eq:zonoproduct}:
\begin{equation}
  \label{eq:lie-enclosure}
  \lie_f\BarrierNN(\XzeroSub{i}) \subseteq \nnGradSet_{\XzeroSub{i}}^\top\opEnclose{f}{\XzeroSub{i}}\text{.}
\end{equation}

\subsection{Algorithm}

\cref{alg:learn-nbc} realizes our set-based training for neural barrier certificates:
First, the neural network~$\BarrierNN$ is initialized.
Second, the losses for all unsafe states~\firstprop{} and all initial states~\secondprop{} are computed.
Third, the zero-level set is enclosed.
Fourth, for each zero-level set, the gradient set and the Lie derivative are enclosed.
Fifth, the losses for all zero-level sets are computed.
Finally, the total loss is computed based on which the parameters of the neural network are updated.
Once the total loss is zero, the neural network is a valid barrier certificate~(\cref{thm:soundness-loss}) and the training terminates.

\begin{algorithm}
  \caption{Set-Based Training of Neural Barrier Certificates}
  \label{alg:learn-nbc}
  \begin{algorithmic}[1]
    \Require State space $\X \subset \R^n$, initial set $\Xinit \subseteq \X$, unsafe region $\Xu \subseteq \X$, and dynamics $f\colon \X \to \R^n$.
    \State Initialize network $\BarrierNN\colon \X \to \R$
    \Do
    \State \texttt{--- Property~\firstprop{} ---} \Comment{\cref{sec:setBasedLoss}}
    \State Enclose~$\BarrierNN$ for all unsafe states~$\Xu$ \Comment{\cref{def:setbasedForwardProp}}
    \State $\setLossI\gets$ Compute loss for~\firstprop{} \Comment{\eqref{eq:lossI}}
    \State \texttt{--- Property~\secondprop{} ---}
    \State Enclose~$\BarrierNN$ for all initial states~$\Xinit$ \Comment{\cref{def:setbasedForwardProp}}
    \State $\setLossII\gets$ Compute loss for~\secondprop{} \Comment{\eqref{eq:lossII}}
    \State \texttt{--- Property~\thirdprop{} ---}
    \State $\XzeroUnion\gets$ Enclose zero-level set \Comment{\cref{sec:zerolevelset}}
    \For{each $\XzeroSub{i}\in\XzeroUnion$}
    \State Enclose gradient set for~$\XzeroSub{i}$\Comment{\cref{sec:lie-derivative}}
    \State Enclose the Lie derivative for~$\XzeroSub{i}$ \Comment{\cref{sec:lie-derivative}}
    \State ${\setLossIII}_{,i}\gets$ Compute loss for~$\XzeroSub{i}$
    \State $\setLossIII\gets \setLossIII + {\setLossIII}_{,i}$ \Comment{\eqref{eq:lossIII}}
    \EndFor
    \State \texttt{--- Update step ---}
    \State $\setLoss\gets \setLossI + \setLossII + \setLossIII$ \Comment{Compute total loss~\eqref{eq:total-loss}}
    \State $\nnParams\gets\nnParams - \eta\nabla_\nnParams \setLoss$ \Comment{Update the parameters}
    \doWhile{$\setLoss > 0$}
    \State \Return valid barrier certificate $\BarrierNN$ \Comment{\cref{thm:soundness-loss}}
  \end{algorithmic}
\end{algorithm}
\section{Experimental Results}
\begin{table*}
  \centering
  \footnotesize
  \caption{Main results: Comparing runtime and success rate. The results are averaged over 10 random neural network initializations.}
  \label{tab:main-results}
  \begin{tabular}{ l l c R@{$\pm$}L R<{\%} R@{$\pm$}L R<{\%} R@{$\pm$}L R<{\%} R@{$\pm$}L R<{\%}}
    \toprule
                              &               &     & \multicolumn{3}{c}{\normalsize Ours} & \multicolumn{3}{c}{\normalsize SynNBC} & \multicolumn{3}{c}{\normalsize FOSSIL} & \multicolumn{3}{c}{\normalsize PRoTECT~}                                                                                                                                                                            \\
    \cmidrule(lr){4-6}  \cmidrule(lr){7-9}  \cmidrule(lr){10-12} \cmidrule(lr){13-15}
    Dynamics                  & Benchmark     & $n$ & \multicolumn{2}{c}{Time [s]}         & \multicolumn{1}{c}{Success}            & \multicolumn{2}{c}{Time [s]}           & \multicolumn{1}{c}{Success}              & \multicolumn{2}{c}{Time [s]} & \multicolumn{1}{c}{Success} & \multicolumn{2}{c}{Time [s]} & \multicolumn{1}{c}{Success}                                                  \\
    \midrule
    \multirow{5}{*}{Linear}   & Three Sets    & 2   & 0.5                                  & 0.16                                   & 100                                    & \multicolumn{2}{c}{--}                   & 0                            & \bf{0.1}                    & 0.09                         & 100                         & 0.2                    & 0.01     & 100        \\
                              & Two Barriers  & 2   & \bf{0.6}                             & 0.23                                   & 100                                    & \multicolumn{2}{c}{--}                   & 0                            & 2.8                         & 1.90                         & 100                         & \multicolumn{2}{c}{--} & 0                     \\
                              & Peruffo (4D)  & 4   & \bf{0.5}                             & 0.02                                   & 100                                    & 11.6                                     & 0.34                         & 100                         & 9.9                          & 8.66                        & 100                    & 0.8      & 0.01 & 100 \\
                              & Peruffo (6D)  & 6   & \bf{3.1}                             & 2.50                                   & 100                                    & 6.3                                      & 0.10                         & 100                         & 12.2                         & 12.05                       & 70                     & 7.7      & 0.02 & 100 \\
                              & Peruffo (8D)  & 8   & \bf{1.0}                             & 1.27                                   & 100                                    & 25.6                                     & 4.19                         & 100                         & 47.7                         & 20.04                       & 80                     & 67.8     & 0.44 & 100 \\
    \midrule
    \multirow{3}{*}{Polynomial}    & Darboux       & 2   & \bf{5.2}                             & 3.44                                   & 100                                    & \multicolumn{2}{c}{--}                   & 0                            & 5.7                         & 5.19                         & 90                          & \multicolumn{2}{c}{--} & 0                     \\
                              & Polynomial    & 2   & 1.4                                  & 0.24                                   & 100                                    & 0.7                                      & 0.02                         & 100                         & 1.0                          & 0.89                        & 100                    & \bf{0.3} & 0.01 & 100 \\
                              & Lyapunov      & 3   & 2.1                                  & 1.40                                   & 100                                    & 1.2                                      & 1.00                         & 100                         & \bf{0.5}                     & 0.48                        & 100                    & 16.9     & 0.22 & 100 \\
    \midrule
    \multirow{5}{*}{Nonpolynomial} & Exponential   & 2   & 19.0                                 & 13.29                                  & 100                                    & \multicolumn{2}{c}{--}                   & 0                            & \bf{0.1}                    & 0.10                         & 100                         & \multicolumn{2}{c}{--} & 0                     \\
                              & Ratschan (3D) & 3   & \bf{0.5}                             & 0.02                                   & 100                                    & \multicolumn{2}{c}{--}                   & 0                            & 1.0                         & 0.10                         & 100                         & \multicolumn{2}{c}{--} & 0                     \\
                              & Ratschan (5D) & 5   & \bf{0.5}                             & 0.03                                   & 100                                    & \multicolumn{2}{c}{--}                   & 0                            & 0.9                         & 0.09                         & 100                         & \multicolumn{2}{c}{--} & 0                     \\
                              & Ratschan (7D) & 7   & \bf{0.6}                             & 0.34                                   & 100                                    & \multicolumn{2}{c}{--}                   & 0                            & 1.0                         & 0.11                         & 100                         & \multicolumn{2}{c}{--} & 0                     \\
                              & Ratschan (9D) & 9   & \bf{1.1}                             & 0.98                                   & 100                                    & \multicolumn{2}{c}{--}                   & 0                            & 1.6                         & 1.13                         & 100                         & \multicolumn{2}{c}{--} & 0                     \\
    \bottomrule
  \end{tabular}
\end{table*}
We have implemented our set-based training of neural barrier certificates using the MATLAB Toolbox CORA~\cite{althoff2015,Althoff2025manual}, and the code will be made available with the next release.
\cref{app:details-experiment-benchmark} contains details on the experiments and benchmarks.

\begin{figure*}
  \centering
  \includetikz[export]{./figures/experiments-groupplot/groupplot}
  \caption{Visualization of computed barrier certificates for different benchmarks.}
  \label{fig:barriervisualization}
\end{figure*}

\subsection{Main Results}
We evaluate our approach on 13 benchmarks spanning linear, polynomial, and nonpolynomial dynamics~(\cref{app:details-experiment-benchmark}): Peruffo~(4D, 6D, 8D)~\cite{peruffo2021}, Darboux~\cite{zeng2016}, Polynomial~\cite{prajna2007}, Lyapunov~\cite{ratschan2006}, Exponential~\cite{liu2015}, Ratschan~(3D, 5D, 7D, 9D)~\cite{ratschan2017}, and two custom benchmarks with multiple initial sets and unsafe regions (Three Sets, Two Barriers).
Because neural networks are randomly initialized~\cite{glorot2010}, each experiment is repeated for ten different random seeds.
We report the \emph{success rate} (percentage of trainings resulting in a valid barrier certificate) and the mean verification \emph{time} with standard deviation.

We compare against three state-of-the-art tools:
\begin{enumerate*}[label=(\roman*)]
  \item SynNBC~\cite{zhao2023},
  \item FOSSIL~\cite{fossil1,fossil2}, both of which use an iterative training-verification cycle (\cref{fig:teaser}a), and
  \item PRoTECT~\cite{Wooding2025}, which uses sum-of-squares optimization to construct polynomial barrier certificates.
\end{enumerate*}

\cref{tab:main-results} reports the results and \cref{fig:barriervisualization} depicts illustrations of valid barrier certificates.
Our approach achieves a 100\% success rate on all considered benchmarks~(\cref{tab:main-results}), demonstrating robustness across linear, polynomial, and nonpolynomial dynamics, as well as benchmarks with multiple initial and unsafe sets.

In contrast, SynNBC and PRoTECT are fundamentally limited to polynomial dynamics and single-set configurations (indicated by~``--'' in \cref{tab:main-results}).
Consequently, they cannot handle the nonpolynomial benchmarks (Exponential, Ratschan) or the two benchmarks with multiple sets (Three Sets, Two Barriers).
Furthermore, SynNBC and PRoTECT rely on intervals; the parabola-shaped unsafe region of the Darboux benchmark cannot be enclosed tightly by an interval, so both tools fail, whereas our approach succeeds by using a zonotopic enclosure.

FOSSIL supports all benchmark categories and also achieves 100\% success on most benchmarks, but is consistently slower than our approach: on Peruffo~(8D), FOSSIL requires on average $47.7$\,s compared to our $1.0$\,s, and our approach is faster on more than half of all benchmarks.
Moreover, FOSSIL's success rate degrades on higher-dimensional systems, e.g., only $70\%$ and $80\%$ on Peruffo~(6D) and Peruffo~(8D), respectively.

In summary, our approach is the only method that solves all considered benchmarks at 100\% success rate, handles more expressive dynamics and set representations than SynNBC and PRoTECT, and is faster than FOSSIL.

\subsection{Training Epochs vs. Training-Verification Cycles}

\begin{table}
  \centering
  \caption{Number of training epochs of our approach and number of training-verification cycles of related approaches.}\label{tab:training_epochs}
  \footnotesize
  \begin{tabular}{l c R@{$\pm$}L R@{$\pm$}L R@{$\pm$}L}
    \toprule
                  &     & \multicolumn{2}{c}{\#Training Epochs} & \multicolumn{4}{c}{\#Training-Verification Cycles}                                                    \\
    \cmidrule(lr){3-4} \cmidrule(lr){5-8}
    Benchmark     & $n$ & \multicolumn{2}{c}{Ours}              & \multicolumn{2}{c}{SynNBC}                         & \multicolumn{2}{c}{FOSSIL}                       \\
    \midrule
    Three Sets    & 2   & 16.1                                  & 10.85                                              & \multicolumn{2}{c}{--}     & 1.10 & 0.32         \\
    Two Barriers  & 2   & 27.7                                  & 34.05                                              & \multicolumn{2}{c}{--}     & 3.6  & 2.50         \\
    Peruffo (4D)  & 4   & 25.1                                  & 5.32                                               & 5.0                        & 0.00 & 4.9  & 3.48  \\
    Peruffo (6D)  & 6   & 3186.8                                & 2918.83                                            & 6.3                        & 0.10 & 11.2 & 10.40 \\
    Peruffo (8D)  & 8   & 642.9                                 & 1238.83                                            & 3.3                        & 0.67 & 17.4 & 7.41  \\
    \midrule
    Darboux       & 2   & 1027.5                                & 708.85                                             & \multicolumn{2}{c}{--}     & 8.30 & 7.44         \\
    Polynomial    & 2   & 234.6                                 & 66.18                                              & 1.0                        & 0.00 & 2.1  & 1.52  \\
    Lyapunov      & 3   & 39.7                                  & 33.75                                              & 1.0                        & 0.00 & 1.3  & 0.48  \\
    \midrule
    Exponential   & 2   & 1799.0                                & 1270.11                                            & \multicolumn{2}{c}{--}     & 1.1  & 0.32         \\
    Ratschan (3D) & 3   & 65.4                                  & 5.02                                               & \multicolumn{2}{c}{--}     & 1.0  & 0.00         \\
    Ratschan (5D) & 5   & 49.9                                  & 19.68                                              & \multicolumn{2}{c}{--}     & 1.0  & 0.00         \\
    Ratschan (7D) & 7   & 284.3                                 & 428.92                                             & \multicolumn{2}{c}{--}     & 1.0  & 0.00         \\
    Ratschan (9D) & 9   & 399.1                                 & 553.93                                             & \multicolumn{2}{c}{--}     & 1.0  & 0.00         \\
    \bottomrule
  \end{tabular}
\end{table}

Recall that our approach does not require a separate verification step~(\cref{fig:teaser}): once training converges, the barrier certificate is valid by construction of the set-based loss function~(\cref{thm:soundness-loss}).
This paradigm shift makes a direct comparison of training epochs with related work ill-defined; nevertheless, \cref{tab:training_epochs} reports our training epochs and the number of training-verification cycles required by SynNBC and FOSSIL.

For most benchmarks, our approach terminates after only a few training epochs.
In contrast, FOSSIL often requires many training-verification cycles, each of which can itself involve up to 1000 training epochs~\cite{fossil2}.
This insight confirms the verification speed-up of our approach observed in \cref{tab:main-results}.

\begin{table*}
  \caption{Ablation Studies: Comparing required epochs and runtime for different neural networks and zero-level set computations. The results are averaged over 10 random neural network initializations.}
  \label{tab:ablationstudies}
  \fontsize{6.5}{7}\selectfont
  \begin{tabular}{c c R@{$\pm$}L R@{$\pm$}L R@{$\pm$}L R@{$\pm$}L R@{$\pm$}L R@{$\pm$}L R@{$\pm$}L R@{$\pm$}L R@{$\pm$}L}
    \toprule
                                 &          & \multicolumn{6}{c}{Peruffo (4D)} & \multicolumn{6}{c}{Lyapunov} & \multicolumn{6}{c}{Exponential}                                                                                                                                                                                                                                                                                                                                               \\
    \cmidrule(lr){3-8}  \cmidrule(lr){9-14}  \cmidrule(lr){15-20}
    $\iota$-$s$-$s_{\text{dim}}$ &          & \multicolumn{2}{c}{N1}           & \multicolumn{2}{c}{N2}       & \multicolumn{2}{c}{N3}           & \multicolumn{2}{c}{N1}    & \multicolumn{2}{c}{N2}     & \multicolumn{2}{c}{N3}    & \multicolumn{2}{c}{N1}   & \multicolumn{2}{c}{N2}    & \multicolumn{2}{c}{N3}                                                                                                                                                                          \\
    \midrule
    \multirow{3}{0.5cm}{1-8-n}   & Epochs   & 25.1                             & 5                            & 3801.8                           & 1180                      & 5795.3                     & 1723                      & 2603.0                   & 3534                      & \multicolumn{2}{c}{--}    & \multicolumn{2}{c}{--} & \multicolumn{2}{c}{--} & \multicolumn{2}{c}{--} & \multicolumn{2}{c}{--}                                                                   \\
                                 & Time [s] & 0.6                              & 0.02                         & 47.3                             & 14.61                     & 98.6                       & 22.03                     & 5.9                      & 7.31                      & \multicolumn{2}{c}{--}    & \multicolumn{2}{c}{--} & \multicolumn{2}{c}{--} & \multicolumn{2}{c}{--} & \multicolumn{2}{c}{--}                                                                   \\
                                 & Success  & \multicolumn{2}{c}{100\%}        & \multicolumn{2}{c}{40\%}     & \multicolumn{2}{c}{40\%}         & \multicolumn{2}{c}{ 20\%} & \multicolumn{2}{c}{ 0\%}   & \multicolumn{2}{c}{ 0\%}  & \multicolumn{2}{c}{ 0\%} & \multicolumn{2}{c}{ 0\%}  & \multicolumn{2}{c}{ 0\%}                                                                                                                                                                        \\

    \midrule
    \multirow{3}{0.5cm}{2-4-n}   & Epochs   & 28.0                             & 8                            & 3432.3                           & 2125                      & 3206.7                     & 1430                      & 957.8                    & 1460                      & 410.1                     & 466                    & 1027.3                 & 1176                   & \multicolumn{2}{c}{--} & \multicolumn{2}{c}{--} & \multicolumn{2}{c}{--}                 \\

                                 & Time [s] & 0.9                              & 0.16                         & 231.8                            & 155.13                    & 309.0                      & 172.08                    & 4.5                      & 5.94                      & 4.9                       & 4.81                   & 14.5                   & 16.13                  & \multicolumn{2}{c}{--} & \multicolumn{2}{c}{--} & \multicolumn{2}{c}{--}                 \\

                                 & Success  & \multicolumn{2}{c}{ 100\%}       & \multicolumn{2}{c}{ 60\%}    & \multicolumn{2}{c}{ 30\%}        & \multicolumn{2}{c}{ 60\%} & \multicolumn{2}{c}{ 80\%}  & \multicolumn{2}{c}{ 70\%} & \multicolumn{2}{c}{ 0\%} & \multicolumn{2}{c}{ 0\%}  & \multicolumn{2}{c}{ 0\%}                                                                                                                                                                        \\

    \midrule
    \multirow{3}{0.5cm}{2-8-n}   & Epochs   & 26.8                             & 9                            & 4798.2                           & 2157                      & 7531.0                     & 2246                      & 1317.0                   & 2911                      & 65.9                      & 45                     & 1869.8                 & 3447                   & \multicolumn{2}{c}{--} & 1753.3                 & 184                    & 1537.0 & 576  \\

                                 & Time [s] & 0.6                              & 0.04                         & 48.0                             & 26.65                     & 145.8                      & 92.39                     & 5.1                      & 10.01                     & 1.2                       & 0.43                   & 19.1                   & 27.20                  & \multicolumn{2}{c}{--} & 8.8                    & 0.95                   & 11.7   & 3.66 \\

                                 & Success  & \multicolumn{2}{c}{ 100\%}       & \multicolumn{2}{c}{ 90\%}    & \multicolumn{2}{c}{ 30\%}        & \multicolumn{2}{c}{ 50\%} & \multicolumn{2}{c}{ 100\%} & \multicolumn{2}{c}{ 80\%} & \multicolumn{2}{c}{ 0\%} & \multicolumn{2}{c}{ 60\%} & \multicolumn{2}{c}{ 90\%}                                                                                                                                                                       \\

    \midrule
    \multirow{3}{0.5cm}{4-2-n}   & Epochs   & 30.6                             & 14                           & 4378.8                           & 2357                      & 5417.8                     & 1375                      & 893.0                    & 1964                      & 397.7                     & 533                    & 1940.5                 & 3446                   & \multicolumn{2}{c}{--} & 2034.3                 & 1361                   & 1945.8 & 1222 \\
                                 & Time [s] & 1.3                              & 0.45                         & 361.8                            & 281.25                    & 439.6                      & 146.77                    & 7.2                      & 14.59                     & 5.6                       & 6.47                   & 32.2                   & 56.00                  & \multicolumn{2}{c}{--} & 12.0                   & 7.73                   & 14.3   & 8.65 \\
                                 & Success  & \multicolumn{2}{c}{ 100\%}       & \multicolumn{2}{c}{ 80\%}    & \multicolumn{2}{c}{ 50\%}        & \multicolumn{2}{c}{ 50\%} & \multicolumn{2}{c}{ 100\%} & \multicolumn{2}{c}{ 40\%} & \multicolumn{2}{c}{ 0\%} & \multicolumn{2}{c}{ 60\%} & \multicolumn{2}{c}{ 40\%}                                                                                                                                                                       \\
    \bottomrule
  \end{tabular}
\end{table*}

\subsection{Ablation Studies}

Subsequently, we conduct an ablation study to analyze the impact of
\begin{enumerate*}[label=(\roman*)]
  \item the neural network architecture and
  \item the parameters for the zero-level set computation.
\end{enumerate*}
In \cref{tab:ablationstudies}, we compare the number of epochs and the time in seconds required for the verification of three benchmarks with linear, polynomial, and nonpolynomial dynamics as well as the verification success.

\subsubsection{Experimental Setup} We consider three neural network architectures:
\begin{enumerate*}
  \item[(N1)] a single linear layer network with ten neurons ($\numLayers=1$),
  \item[(N2)] a network consisting of two linear and one nonlinear layer with eight hidden neurons ($\numLayers = 3$), and
  \item[(N3)] a deeper network having three linear and two nonlinear layers with five hidden neurons ($\numLayers = 5$).
\end{enumerate*}
In all architectures, the number of input neurons corresponds to the dimension of the system, and each neural network has a single output neuron.
For each of the neural networks, we investigate four different choices of parameters for the computation of the zero-level set enclosure in ten runs. The number of iterations $\iota$, splits $s$, and the number of dimensions along which we split $s_{\text{dim}}$ for the computation of the enclosure $\XzeroUnion$, are indicated in the first column of \cref{tab:ablationstudies}.

\subsubsection{Impact of the Network Architecture}
The results show that no architecture is the optimal choice across all benchmarks.
While the linear Peruffo~(4D) benchmark can be verified with the simple network N1 consistently in all configurations, the nonpolynomial Exponential benchmark cannot be verified with N1 within $10{,}000$ epochs.
This indicates, that the expressivity of the shallow network is not sufficient to verify complex systems.
In general, the required network complexity depends more on the dynamics of the system rather than the dimensionality.
For polynomial and nonpolynomial dynamics, deeper neural networks are required.
In particular, N2 performs best for the polynomial Lyapunov benchmark while N3 is required to verify the nonpolynomial Exponential benchmark.
These findings are consistent with the selection of networks that achieved the fastest and $100\%$ successful verification runs, as reported in \cref{tab:main-results}. In general, the runtime increases for networks with more layers, where N2 poses an exception in the case of the Lyapunov benchmark.

\subsubsection{Impact of the Zero-Level Set Enclosure}
The zero-level set computation has a negligible effect on the performance of the N1 architecture but significantly influences N2 and N3.
Systems with more complex dynamics tend to require a more elaborate zero-level set to satisfy property \thirdprop{}.
Therefore, more sets are necessary to ensure a tight enclosure.
Accordingly, configurations with larger $\iota$ and $s$ tend to improve the verification success for more complex dynamics.
This is visible in \cref{tab:ablationstudies}, as the bottom two rows have a higher number of splits or iterations, thereby enabling a more refined zero-level set enclosure.
Notably, a small number of epochs does not allow any conclusion to be drawn about the duration of the verification, as the time per epoch depends on both the number of sets in the zero-level set enclosure and the size of the neural network.

\subsubsection{Summary}
Overall, the ablation study highlights a fundamental trade-off.
While the verification with the simple N1 network is fast and reliable for the linear benchmark and has a moderate success rate for the polynomial benchmark, it fails for the nonpolynomial benchmark.
In contrast, deeper architectures N2 and N3 provide the expressive power required for polynomial and nonpolynomial systems, at the cost of increased computational effort.
Moreover, the choice of enclosure parameters becomes increasingly important as the complexity of the network and, especially, the system dynamics grows.
\section{Related Work}
The related work can be broadly categorized into three main research directions: traditional methods for generating barrier certificates through optimization, approaches using neural networks for the synthesis, and certified training techniques that integrate formal guarantees into the learning process.

\subsubsection{Barrier Certificate Generation}
The generation of barrier certificates can be formulated as an optimization problem~\cite{prajna2006}. There exist multiple approaches to simplify and generalize the synthesis of barrier certificates, like exponential conditions~\cite{kong2014}, compositional conditions~\cite{sloth2012}, Darboux polynomials~\cite{zeng2016}, and the combination of two functions to represent one barrier certificate~\cite{dai2017}. Other techniques employ interval analysis~\cite{bouissou2014} or simulation data~\cite{kapinski2014} to generate candidate barrier certificates. All of these approaches are unsound as they are based on iterative and numerical computation~\cite{peruffo2021} and cannot guarantee safety in general. Therefore, to guarantee the safety of the system, the generated candidate barrier certificates have to be formally verified in a subsequent step, e.g., using satisfiability modulo theories (SMT) solvers~\cite{kapinski2014}. 
PRoTECT \cite{Wooding2025} enables simultaneous verification of multiple candidate barrier certificates using parallelization.
These ideas have also been extended to stochastic systems~\cite{prajnaStoch2004} and to control barrier functions (CBFs)~\cite{ames2019,wieland2007}, which are used for real-time safe control synthesis rather than offline safety verification.

\subsubsection{Synthesis Using Neural Networks}
Neural networks are suitable for barrier certificate synthesis because, by the universal approximation theorem~\cite{cybenko1989,hornik1989,leshno1993}, they can represent any continuous function, in particular valid barrier certificates.
The general pipeline for synthesizing neural barrier certificates~\cite{zhao2020} follows two phases: first, a neural network is trained to satisfy the barrier certificate properties on samples of the state space, second, to ensure the validity of the barrier certificate across the entire state space, formal verification is required.

An instance of this pipeline is the counterexample-guided inductive synthesis (CEGIS)~\cite{peruffo2021}, which uses an iterative cycle between a learner and a verifier:
The learner updates the neural network to satisfy the barrier conditions, while the verifier (e.g., an SMT solver) either certifies the candidate or returns counterexamples that guide the next training iteration.
FOSSIL~\cite{fossil1,fossil2} implements this approach as a software tool.
There are different variants of CEGIS that mainly use different verifiers, e.g., cylindrical algebraic decomposition solvers~\cite{ding2022} that can produce multiple counterexamples, or sum-of-squares programming (SynNBC~\cite{zhao2023} and SynHBC~\cite{zhao2024SynHBC}).
The cycle can be augmented with a distillation phase between training and verification, which reduces the complexity of the neural network to simplify the subsequent verification task~\cite{ma2025}.
SEEV~\cite{zhang2024seev} takes a geometric approach by exploiting the piecewise-linear activation regions of ReLU neural networks.

A closely related line of work targets neural Lyapunov functions rather than barrier certificates, incorporating the required properties into the training loss~\cite{chang2019,dawson2023}.
A comprehensive survey of learning-based certificate synthesis is provided in~\cite{dawson2023}.

\subsubsection{Certified Training via Set-Based Propagation}
Formal verification of neural network properties using abstract interpretation~\cite{gehr2018,singh2019} and SMT solvers~\cite{katz2017reluplex} has motivated the idea of incorporating set-based computing directly into training~\cite{mirman2018,gowal2019,koller2025shadows}, which mainly focuses on input-output robustness of classifiers; instead, we propagate sets to dynamically verify neural network properties during training, thereby eliminating the training-verification cycle entirely.

\section{Conclusions}
We propose a novel set-based training approach to synthesize neural barrier certificates.
Prior work follows an iterative training-verification cycle: train a candidate barrier certificate on samples of the state space, verify its validity with an external solver, and repeat until success.
We replace this costly cycle with a single set-based training step, where verification is integrated directly into the training loss function.
Our sound set-based loss encodes all three barrier certificate properties via conservative set propagation over zonotopes; when the loss reaches zero, the barrier certificate is valid by construction and no separate verification step is needed.
All operations reduce to efficient matrix multiplications, yielding polynomial complexity (\cref{sec:app:complexity}) that lets our set-based training scale to high-dimensional systems.
Moreover, the approach is largely agnostic to the type of system, 
such that it can also naturally handle complex nonlinear dynamics.
Our experiments demonstrate that our approach outperforms prior works on most benchmarks, particularly on high-dimensional and nonpolynomial systems.
Thus, our set-based synthesis of barrier certificates can efficiently generate barrier certificates without requiring subsequent verification.

\appendix 
\crefalias{section}{appendix}
\crefalias{subsection}{appendix}
\subsection{Computational Complexity}\label{sec:app:complexity}
In this section, we analyse the computational complexity of a single training iteration in~\cref{alg:learn-nbc} and discuss the convergence of the training.

Before stating the result, we introduce some further notation.
During a set-based forward propagation~(\cref{def:setbasedForwardProp}), the propagated zonotopes have at most $\numGens\leq\numNeurons_0+\sum_{k\in[\numLayers]}\numNeurons_k$ generators, and we write $\numNeurons_\text{max}\coloneqq\max_{k\in\{0\}\cup[\numLayers]}\numNeurons_k$
for the maximum layer width; in particular, the state dimension satisfies $n=\numNeurons_0\leq\numNeurons_\text{max}$.
Enclosing the output of the neural network takes time $\bigO(\numNeurons_\text{max}^2\,\numGens\,\numLayers)$~\cite[Prop.~16]{koller2025}.
Finally, let $\tau_f$ denote the number of elementary operations needed to evaluate~$f$, i.e., a single interval-arithmetic enclosure of~$f$ takes time~$\bigO(\tau_f)$.

A training iteration~(\cref{alg:learn-nbc}) is linear in the number of zero-level sets~$\numZsets$ and in~$\tau_f$, and quadratic in the layer width, and in particular in the state dimension~$n$.

\begin{proposition}[Time Complexity of an Iteration in~\cref{alg:learn-nbc}]\label{prop:training_complexity}
  An iteration in~\cref{alg:learn-nbc} has time complexity $\bigO(\numZsets\,\numNeurons_\text{max}^2\,\numGens\,\numLayers + \numZsets\,\tau_f)$ w.r.t. $\numNeurons_\text{max}$, $\numGens$, the number of layers $\numLayers$, the number of zero-level sets~$\numZsets$, and $\tau_f$.
  \begin{proof}
    Computing~$\setLossI$ and~$\setLossII$ each require
    \begin{enumerate*}[label=(\roman*)]
      \item a set-based forward propagation~($\bigO(\numNeurons_\text{max}^2\,\numGens\,\numLayers)$) and
      \item computing the interval bounds of the output zonotope~($\bigO(\numGens)$).
    \end{enumerate*}
    Hence, computing loss~$\setLossI$ and~$\setLossII$ takes a total time of~$\bigO(\numNeurons_\text{max}^2\,\numGens\,\numLayers)$.

    Moreover, $\setLossIII$ is computed for each zero-level set, which requires
    \begin{enumerate*}[label=(\roman*)]
      \item enclosing the set of gradients~($\bigO(\numNeurons_\text{max}^2\,\numGens\,\numLayers)$),
      \item enclosing the Lie derivative~($\bigO(\numGens\,n + \tau_f)$), and
      \item enclosing the zonotope product~($\bigO(\numGens\,n^2)$).
    \end{enumerate*}
    Hence, in total, computing~$\setLossIII$ takes time~$\bigO(\numZsets\,(\numNeurons_\text{max}^2\,\numGens\,\numLayers + \numGens\,n + \tau_f + \numGens\,n^2)) = \bigO(\numZsets\,\numNeurons_\text{max}^2\,\numGens\,\numLayers + \numZsets\tau_f)$; since $n\leq\numNeurons_\text{max}$ and $n\leq\numGens$, the terms $\numGens\,n^2$ and $\numGens\,n$ are absorbed into $\numNeurons_\text{max}^2\numGens\numLayers$.

    Moreover, aggregating the losses and updating the parameters using backpropagation takes time~$\bigO(\numNeurons_\text{max}^2\,\numGens\,\numLayers)$~\cite[Sec.~4.4]{koller2025}.

    Thus, in total, an iteration in~\cref{alg:learn-nbc} takes time~$\bigO(\numZsets\,\numNeurons_\text{max}^2\,\numGens\,\numLayers + \numZsets\tau_f)$.
  \end{proof}
\end{proposition}

\subsection{Proofs}
\label{app:proofs}

In this section, we provide all missing proofs.

\renewcommand{\thetheorem}{\ref{thm:soundness-loss}}
\begin{proof}\textit{(of \cref{thm:soundness-loss})}
  We assume
  \begin{align}\label{eq:thm:soundness-loss:proof:assms}
    \setLoss(\Yu,\Yinit,\YzeroUnion) = 0\text{,}
  \end{align}
  and we show the safety of the system in~\eqref{eq:cont_sys}.
  Applying \cref{thm:barrier}, we show that the neural network~$\BarrierNN$ satisfies the properties \firstprop{} -- \thirdprop{}.
  Moreover, with assumption~\eqref{eq:thm:soundness-loss:proof:assms} and definition~\eqref{eq:total-loss}, we have $
    \setLossI(\Yu) + \setLossII(\Yinit) + \setLossIII(\YzeroUnion) = 0$, and by the nonnegativity of the individual losses  $\setLossI$, $\setLossII$, and $\setLossIII$, we have
  \begin{align*}
    \setLossI(\Yu)  \overset{\eqref{eq:lossI}}             & {=} 0\text{,} &
    \setLossII(\Yinit)  \overset{\eqref{eq:lossII}}        & {=} 0\text{,} &
    \setLossIII(\YzeroUnion)  \overset{\eqref{eq:lossIII}} & {=} 0\text{.}
  \end{align*}
  From $\setLossI = 0$ with~\eqref{eq:lossI}, we can deduce
  \begin{multline*}
    \max(0,-\buLower + \epsilon) = 0 \iff -\buLower + \epsilon \leq 0 \\
    \iff \buLower > 0 \implies \forall x\in\Xu\colon \BarrierNN(x) > 0\text{.}
  \end{multline*}
  Thus, the neural network $\BarrierNN$ satisfies property~\firstprop{}.
  Moreover, from $\setLossII = 0$ with \eqref{eq:lossII}, we can deduce
  \begin{align*}
    \max(0,\binitUpper) = 0 \iff \binitUpper \leq 0 \implies \forall x\in\Xinit\colon \BarrierNN(x) \leq 0\text{.}
  \end{align*}
  Thus, the neural network $\BarrierNN$ satisfies property~\secondprop{}.
  Moreover, from $\setLossIII = 0$, we can deduce for each $i\in[\numZsets]$,
  \begin{multline*}
    \max(0,\bzeroUpperSub{i} + \epsilon) = 0 \iff \bzeroUpperSub{i} + \epsilon \leq 0 \\ \implies \forall x\in\Xzero\colon \lie_f\BarrierNN(x) < 0\text{.}
  \end{multline*}
  Thus, the neural network $\BarrierNN$ satisfies property~\thirdprop{}.
\end{proof}

\renewcommand{\theproposition}{\ref{prop:input_set_refinement}}
\begin{proof}\textit{(of \cref{prop:input_set_refinement})}
  This follows directly from~\cite[Prop.~2]{koller2025shadows}, where we want to enclose the zero-level set, thus,
  \begin{equation*}
    \mathcal{U}=\{0\} = \{y\in\R\mid [1\ -1]^\top y \leq \0 \}\text{.}
  \end{equation*}
\end{proof}

\renewcommand{\theproposition}{\ref{prop:enclosing-setbackprop}}

\begin{proof}\textit{(of \cref{prop:enclosing-setbackprop})}
  We prove the gradient enclosure by induction on $k=\numLayers,\dots,1$.
  Let $\nnHiddenSet_{0} = \XzeroSub{i}$, and $\nnHiddenSet_{1}, \dots, \nnHiddenSet_{\numLayers}$ be the enclosures of output sets of the hidden layers~(\cref{def:setbasedForwardProp}).
  If $k=\numLayers$, all gradients are $\nabla_{\nnHidden_{\numLayers}}\BarrierNN(\nnInput) \overset{\text{\cref{def:setbasedForwardProp}}}{=} \nabla_{\nnHidden_{\numLayers}}\nnHidden_{\numLayers} = \ones$, hence
  \begin{align}
    \nnGradSet_\numLayers = \shortZ{1}{\0} \supseteq \{\nabla_{\nnHidden_{\numLayers}}\nnHidden_{\numLayers}\mid \nnHidden_{\numLayers}\in\nnHiddenSet_{\numLayers}\}\text{.}
  \end{align}
  For $k<\numLayers$, we assume
  \begin{align}\label{eq:proof:grad_enc:assms}
    \nnGradSet_{k} \supseteq \{\nabla_{\nnHidden_{k}}\BarrierNN(\nnInput)\mid \nnHidden_{k}\in\nnHiddenSet_{k}\}\text{,}
  \end{align}
  and we show $\nnGradSet_{k-1} \supseteq \{\nabla_{\nnHidden_{k-1}}\BarrierNN(\nnInput)\mid \nnHidden_{k-1}\in\nnHiddenSet_{k-1}\}$.
  We split cases on the type of the $k$-th layer and simplify the terms.
  \begin{enumerate}[label=Case (\roman*)., wide=0pt, font=\itshape]
    \item The $k$-th layer is linear. Hence, we have~\cite[Sec.~5.3]{bishop2006}
          \begin{align}\label{eq:proof:grad_enc:linear_grad}
            \nabla_{\nnHidden_{k-1}}\BarrierNN(\nnInput) = W_k^\top\,\nabla_{\nnHidden_{k}}\BarrierNN(\nnInput)\text{.}
          \end{align}
          Thus,
          \begin{align*}
            \nnGradSet_{k-1}                                       & = W_k^\top\,\nnGradSet_{k} \overset{\text{\eqref{eq:proof:grad_enc:assms}}}{\supseteq} W_k^\top\,\{\nabla_{\nnHidden_{k}}\BarrierNN(\nnInput)\mid \nnHidden_{k}\in\nnHiddenSet_{k}\} \\
                                                                   & = \{W_k^\top\,\nabla_{\nnHidden_{k}}\BarrierNN(\nnInput)\mid \nnHidden_{k}\in\nnHiddenSet_{k}\}                                                                                      \\
            \overset{\text{\eqref{eq:proof:grad_enc:linear_grad}}} & {=} \{\nabla_{\nnHidden_{k-1}}\BarrierNN(\nnInput)\mid \nnHidden_{k-1}\in\nnHiddenSet_{k-1}\}\text{.}
          \end{align*}
    \item The $k$-th layer is nonlinear. We enclose the derivative of the activation function, i.e.,
          \begin{align}
            \opEnclose{\nnActFun_k'}{\nnHiddenSet_{k-1}}\supseteq \{\nabla_{\nnHidden_{k-1}}\nnActFun_k'(\nnHidden_{k-1})\mid \nnHidden_{k-1}\in\nnHiddenSet_{k-1}\}\text{.}
          \end{align}
          Moreover, we have~\cite[Sec.~5.3]{bishop2006} for each $i\in[\numNeurons_{k}]$
          \begin{align}\label{eq:proof:grad_enc:nonlinear_grad}
            \nabla_{\nnHidden_{k-1(i)}}\BarrierNN(\nnInput) = \nnActFun_k'(\nnHidden_{k-1(i)})\,\nabla_{\nnHidden_{k(i)}}\BarrierNN(\nnInput)\text{.}
          \end{align}
          Thus,
          \begin{align*}
            \nnGradSet_{k-1}                                                          & = \opEnclose{\nnActFun_k'}{\nnHiddenSet_{k-1}}\cdot\nnGradSet_k                                                                                  \\
            \overset{\text{\eqref{eq:proof:grad_enc:assms}}}                          & {\supseteq} \opEnclose{\nnActFun_k'}{\nnHiddenSet_{k-1}}\cdot\{\nabla_{\nnHidden_{k}}\BarrierNN(\nnInput)\mid \nnHidden_{k}\in\nnHiddenSet_{k}\} \\
            \overset{\eqref{eq:zonoproduct}}                                          & {\supseteq} \{\nnActFun_k'(\nnHidden_{k-1(i)})\,\nabla_{\nnHidden_{k}}\BarrierNN(\nnInput)\mid \nnHidden_{k}\in\nnHiddenSet_{k}\}                                        \\
            \overset{\text{\eqref{eq:proof:grad_enc:nonlinear_grad}}}                 & {=} \{\nabla_{\nnHidden_{k-1}}\BarrierNN(\nnInput)\mid \nnHidden_{k-1}\in\nnHiddenSet_{k-1}\}\text{.}
          \end{align*}
  \end{enumerate}
\end{proof}

\subsection{Details on Experiments and Benchmarks}
\label{app:details-experiment-benchmark}

\subsubsection{Experimental Details}
\begin{table}
  \centering
  \newcolumntype{.}{D{.}{.}{1.3}}
  \newcolumntype{,}{D{.}{.}{2}}
  \caption{Parameter choice for each dynamic system.}
  \label{tab:parameterChoice}
  \begin{tabular}{l l c c , . c c}
    \toprule
    Dynamic                 & Benchmark     & $n$ & $\numLayers$ & \multicolumn{1}{c}{$\numNeurons_k$} & \multicolumn{1}{c}{$\eta$} & $\beta_1$ & $\iota$-$s$-$s_{\text{dim}}$ \\
    \midrule
    \multirow{5}{*}{Linear} & Three Sets    & 2   & 1            & -                                   & 0.1                        & 0.3       & 1-3-n                        \\
                            & Two Barriers  & 2   & 3            & 10                                  & 0.1                        & 0.3       & 2-4-n                        \\
                            & Peruffo~(4D)  & 4   & 1            & -                                   & 0.1                        & 0.9       & 1-8-n                        \\
                            & Peruffo~(6D)  & 6   & 1            & -                                   & 0.1                        & 0.3       & 2-4-2                        \\
                            & Peruffo~(8D)  & 8   & 1            & -                                   & 0.1                        & 0.3       & 2-4-2                        \\
    \midrule
    \multirow{3}{*}{Poly.}  & Darboux       & 2   & 3            & 8                                   & 0.1                        & 0.3       & 2-9-n                        \\
                            & Polynomial    & 2   & 3            & 8                                   & 0.01                       & 0.3       & 2-8-n                        \\
                            & Lyapunov      & 3   & 3            & 8                                   & 0.1                        & 0.9       & 1-32-n                       \\
    \midrule
    \multirow{5}{*}{Nonpoly.}
                            & Exponential   & 2   & 5            & 5                                   & 0.001                      & 0.3       & 2-10-n                       \\
                            & Ratschan~(3D) & 3   & 1            & -                                   & 0.1                        & 0.3       & 1-3-n                        \\
                            & Ratschan~(5D) & 5   & 1            & -                                   & 0.1                        & 0.3       & 2-2-n                        \\
                            & Ratschan~(7D) & 7   & 1            & -                                   & 0.1                        & 0.3       & 2-2-2                        \\
                            & Ratschan~(9D) & 9   & 1            & -                                   & 0.1                        & 0.3       & 2-4-2                        \\
    \bottomrule
  \end{tabular}
\end{table}

The experiments are conducted on an Apple MacBook Pro (13-inch, 2020) equipped with Apple's M1 chip and 16 GB of unified memory.
The FOSSIL experiments were run on a laptop with an Intel Core i7-13700H, since FOSSIL's Python dependencies do not install on Apple Silicon. This does not disadvantage FOSSIL, as the i7-13700H is a more recent and more powerful CPU than the M1.

To enclose the flow of nonlinear systems in \cref{sec:lie-derivative}, we use range bounding based on interval arithmetic~\cite[Sec. 2.3.3]{jaulin2001} to obtain bounds for $f(\XzeroSub{i})$.
We reduce the conservativeness of interval arithmetic by splitting the zero-level sets $\XzeroSub{i}$.
This is mainly done for computational efficiency, but one could also apply techniques similar to \cref{def:setbasedForwardProp} to obtain tighter enclosures.

If not stated otherwise, each experiment is repeated 10 times.
For the evaluation provided in \cref{tab:main-results}, we use the parameters in \cref{tab:parameterChoice}.
The neural networks consist of a specified number of linear layers and hidden neurons, where the input dimension is the system dimension and the output dimension is one. We use the Adam optimizer with learning rate $\eta$ and momentum parameter $\beta_1$.
Further, we pre-train each neural network for 10 epochs to fit a radial basis function centered around the initial set.
The zero-level set $\XzeroUnion$ computation is based on three criteria:
The number of iterations $\iota$, the number of splits $s$, and the number of dimensions along which we split $s_{\text{dim}}$.

For the ablation studies summarized in \cref{tab:ablationstudies}, we use $\eta = 0.1$, $\beta_1=0.9$ for the Peruffo~(4D) and the Lyapunov benchmark, and $\eta = 0.001$, $\beta_1 = 0.3$ for the Exponential benchmark. In general, the number of split dimensions of $\XzeroUnion$ is $n$. The zero-level set computation, consisting of two iterations and eight splits, has split dimension two for all three benchmarks.

\subsubsection{Benchmarks}

In this section, we provide the dynamics, initial sets, and unsafe sets for all benchmarks.

\paragraph{Three Sets}
\begin{equation}
  \begin{split}
    \begin{gathered}
      f(x) = \cmat{0 \\ x_{(2)}}\text{, }
      \X  = \shortI{0}{4}^2\text{, }
      \Xinit  = \shortI{\cmat{1.7 \\ 2.7}}{\cmat{2.3 \\ 3.3}}\text{,}
      \\
      \Xu = \shortI{0.7}{2.3}^2 \cup \shortI{\cmat{2.7 \\ 1.7}}{\cmat{3.3 \\ 2.3}}.
    \end{gathered}
  \end{split}
\end{equation}

\paragraph{Two Barriers}
\begin{equation}
  \begin{split}
    \begin{gathered}
      f(x) = \cmat{-x_{(1)} \\ -x_{(2)}}\text{,} \\
      \begin{aligned}
        \X     & = \shortI{0}{4}^2\text{,}                              &
        \Xinit & = \shortI{0.7}{1.3}^2 \cup \shortI{2.7}{3.3}^2\text{,}
      \end{aligned} \\
      \Xu = \shortI{\cmat{0.3 \\ 3.3}}{\cmat{0.7 \\ 3.7}} \cup \shortI{\cmat{3.3 \\ 0.3}}{\cmat{3.7 \\ 0.7}}.
    \end{gathered}
  \end{split}
\end{equation}

\paragraph{Peruffo ($n$D)~\cite{peruffo2021}}
\begin{equation}
  \begin{split}
    \begin{gathered}
      f_n(x) = \cmat{ x_{(1)} & \cdots & x_{(n)} & \sum_{i=1}^n w_{(i)}x_{(i)}}^\top\text{,} \\
      w \coloneqq -\cmat{576,\ 2400,\ 4180,\ 3980,\ 2273,\ 800,\ 170,\ 20}\text{,} \\
      \begin{aligned}
        \X     & = \shortI{-2.2}{2.2}^n\text{,} &
        \Xinit & = \shortI{0.9}{1.1}^n\text{,}  &
        \Xu    & = \shortI{-2.2}{-1.8}^n.
      \end{aligned}
    \end{gathered}
  \end{split}
\end{equation}

\paragraph{Darboux~\cite[System 8]{zeng2016}}
\begin{equation}
  \begin{split}
    \begin{gathered}
      f(x) = \cmat{x_{(2)} + 2x_{(1)}x_{(2)} \\ -x_{(1)} + 2x_{(1)}^2 - x_{(2)}^2}\text{,}\
      \X = \shortI{-2}{2}^2\text{,}  \\
      \begin{aligned}
        \Xinit & = \shortI{\cmat{0                     \\ 1}}{\cmat{1 \\ 2}}\text{,} &
        \Xu    & = \{x_{(1)}+x_{(2)}^2\leq 0\}\text{.}
      \end{aligned}
    \end{gathered}
  \end{split}
\end{equation}

In our case, we over-approximate the unsafe set with a zonotope:
\begin{equation*}
  \X_u = \shortZ[.]{\begin{bmatrix}
      -2 \\
      0
    \end{bmatrix}}{\begin{bmatrix}
      0    & 0.75   & 0.75  & 0.25  & 0.25 \\
      0.25 & -0.375 & 0.375 & -0.25 & 0.25
    \end{bmatrix}}
\end{equation*}

\paragraph{Polynomial~\cite{prajna2007} (intervals instead of ellipsoids)}
\begin{align}
  f(x)   & = \cmat{x_{(2)}   \\ -x_{(1)} + 1/3x_{(1)}^3- x_{(2)}}\text{, }
  \X = \shortI{\cmat{-3.5    \\ -2}}{\cmat{2 \\ 1}}\text{,} \nonumber \\
  \Xinit & = \shortI{\cmat{1 \\ -0.5}}{\cmat{2 \\ 0.5}}\text{, }
  \Xu = \shortI{-1.4}{-0.6}^2\text{.}
\end{align}

\paragraph{Lyapunov (based on \cite[Example 7]{ratschan2006})}
\begin{equation}
  \begin{split}
    \begin{gathered}
      f(x) = \cmat{-x_{(2)} \\ -x_{(3)} \\ -x_{(1)} - 2 x_{(2)}-x_{(3)} + x_{(1)}^3}\text{,}\
      \X = \shortI{-2}{2}^3\text{,} \\
      \begin{aligned}
        \Xinit & = \shortI{\cmat{-0.25 \\ -0.25 \\ -0.75}}{\cmat{0.75 \\ 0.75 \\ 0.25}}\text{,} &
        \Xu    & = \shortI{\cmat{1     \\ -2 \\ -2}}{\cmat{2 \\ -1 \\ -1}}\text{.}
      \end{aligned}
    \end{gathered}
  \end{split}
\end{equation}

\paragraph{Exponential (based on \cite[Example 1]{liu2015})}
\begin{equation}
  \begin{split}
    \begin{gathered}
      f(x) = \cmat{e^{-x_{(1)}} +x_{(2)} -1 \\ -\sin^2x_{(1)}}\text{,}\  \X = \shortI{-2}{2}^2\text{,} \\
      \begin{aligned}
        \Xinit & = \shortI{-0.9}{-0.1}^2\text{,} &
        \Xu    & = \shortI{0.4}{1}^2\text{.}
      \end{aligned}
    \end{gathered}
  \end{split}
\end{equation}

\paragraph{Ratschan ($n$D) (based on \cite[Example 5]{ratschan2017})}
\begin{equation}
  \begin{split}
    \begin{gathered}
      f_l(x) = \cmat{
      1 + \frac{1} {100} (\sum_{i\in [l]} (x_{(i+1)} + x_{(i+2)}))\\
      x_{(3)} \\
      -10 \sin x_{(2)}-x_{(2)} \\
      \cdots \\
      x_{(2l+1)} \\
      -10 \sin x_{(2l)}-x_{(2l)}
      }\text{,} \\
      \X = \shortI{-0.3}{0.3}^{2l+1}\text{, }
      \Xinit = \shortI{-0.3}{0}\times\shortI{-0.2}{0.3}^{2l}\text{,}
      \\
      \Xu = \shortI{-0.2}{-0.15}\times\shortI{-0.3}{-0.25}^{2l}\text{.}
    \end{gathered}
  \end{split}
\end{equation}

\section*{References}
\bibliographystyle{IEEEtran}
\bibliography{references}

\begin{IEEEbiography}[
    {\includegraphics[width=1in,height=1.25in,clip,keepaspectratio]{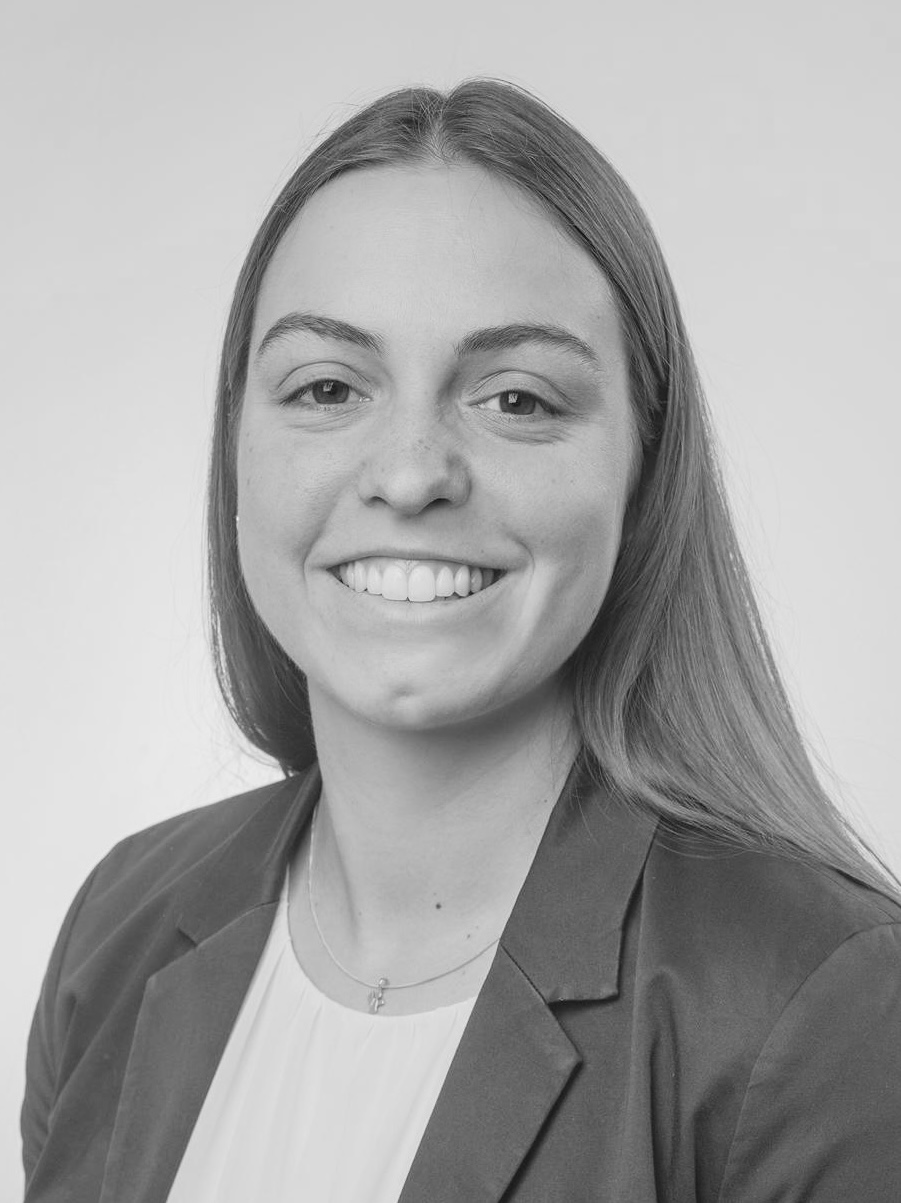}}
  ]
  {Miriam Kranzlmüller}
  received the B.Sc. degrees in mathematics and in computer science from Ludwig-Maximilians-Universität in Munich, Germany, in 2022, and the M.Sc. degree in computer science from the Technical University of Munich, Germany, in 2025. She is currently pursuing the Ph.D. degree at Ludwig-Maximilians-Universität in Munich. Her research interests include formal verification of neural networks as well as theoretical foundations for learning of spiking neural networks.
\end{IEEEbiography}

\begin{IEEEbiography}[
    {\includegraphics[width=1in,height=1.25in,clip,keepaspectratio]{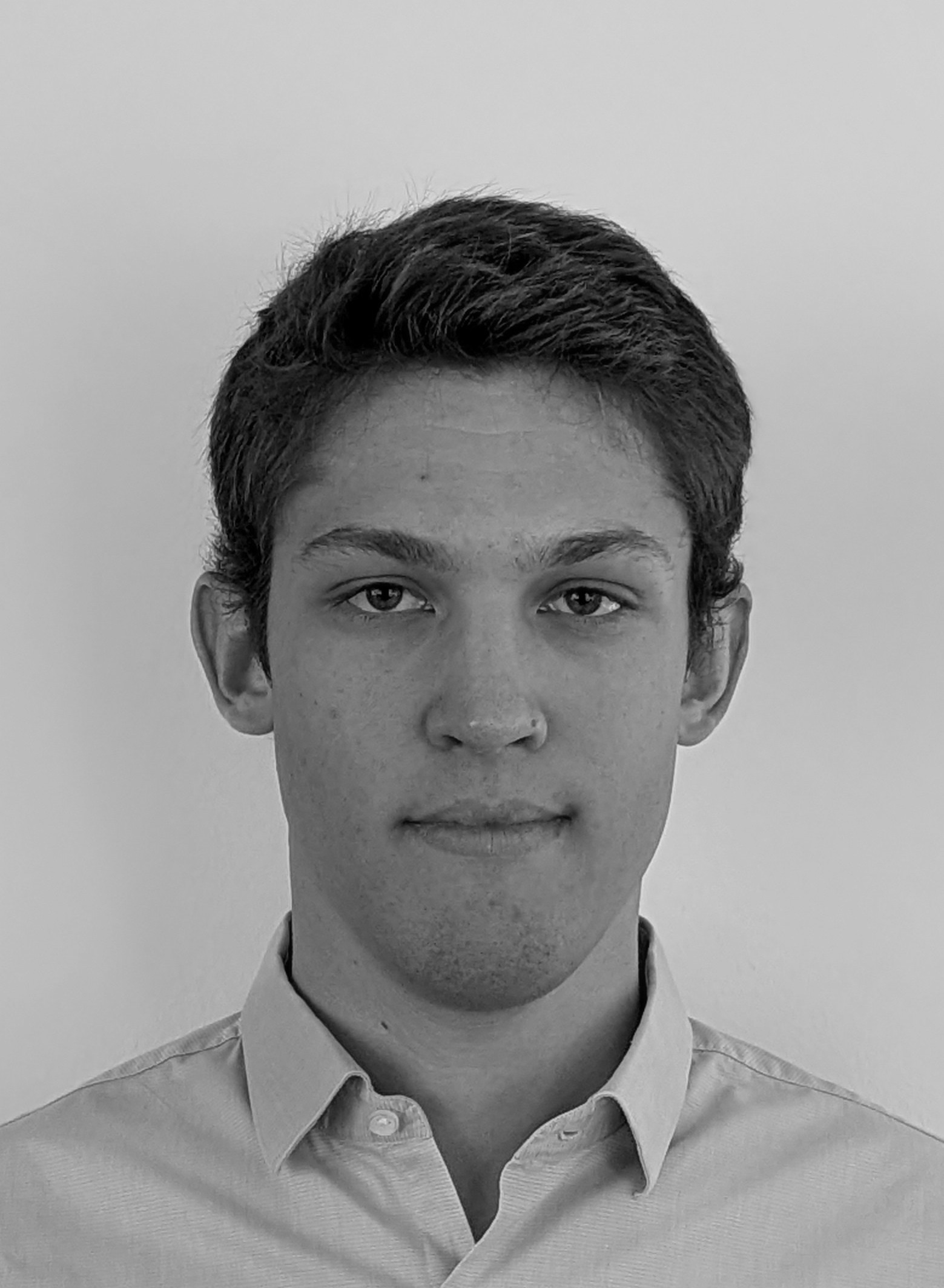}}
  ]
  {Lukas Koller} received the B.Sc. and M.Sc. degrees in computer science from the Technical University of Munich, Germany, in 2021 and 2023, respectively. He is currently pursuing the Ph.D. degree at the Technical University of Munich. His research interests include formal verification and robust training of neural networks using set-based methods.
\end{IEEEbiography}

\begin{IEEEbiography}[{\includegraphics[width=1in,height=1.25in,clip,keepaspectratio]{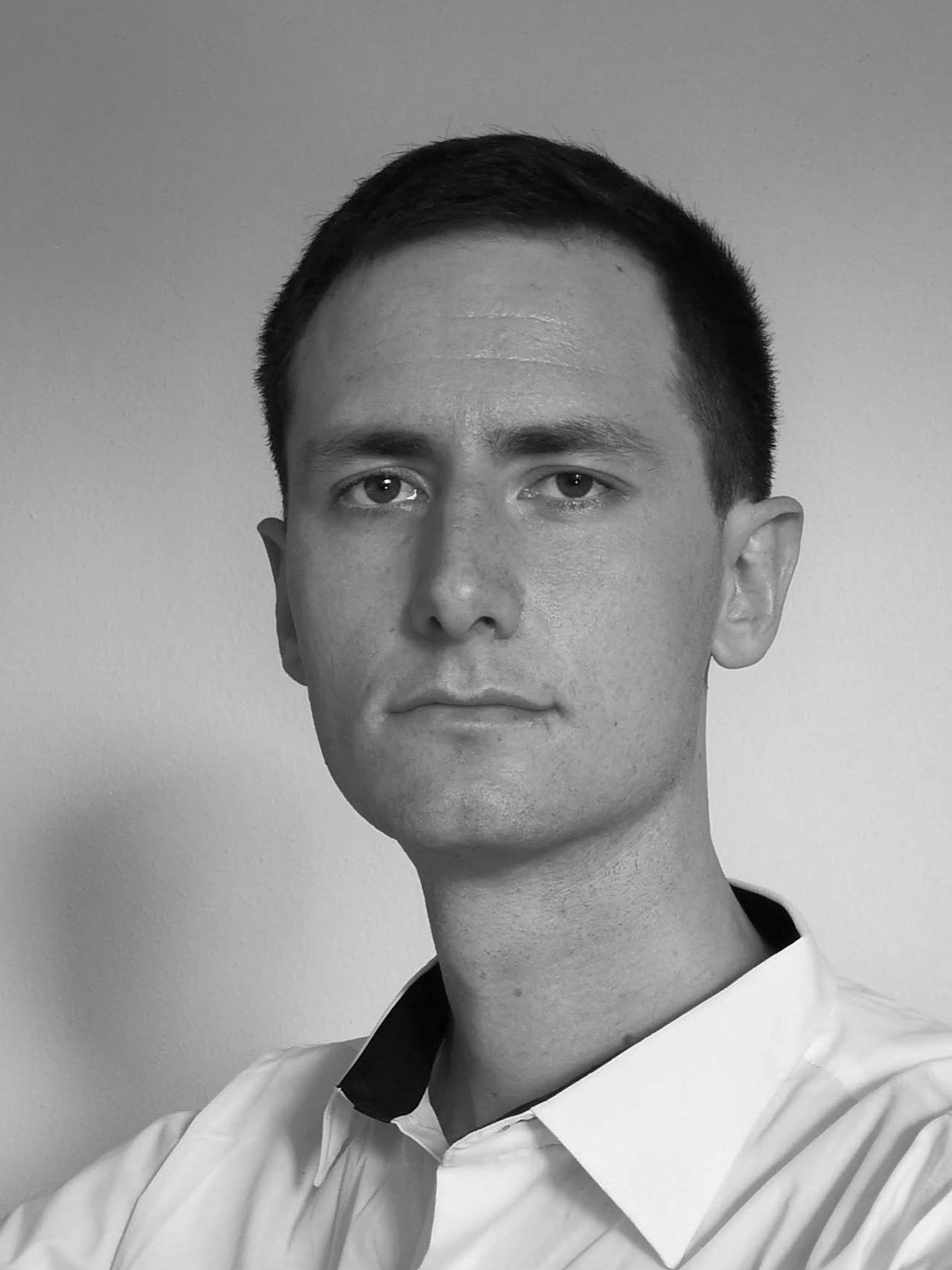}}]
  {Tobias Ladner} received the B.Sc. and M.Sc. degrees in computer science from the Technical University of Munich, Germany, in 2019 and 2021, respectively. He is currently pursuing the Ph.D. degree at the Technical University of Munich and was a visiting researcher at the University of California, Irvine. His research interests include the formal safety of artificial intelligence, with a focus on adversarial robustness properties and explainability, as well as their deployment in neural network control systems.
\end{IEEEbiography}

\begin{IEEEbiography}[{\includegraphics[width=1in,height=1.25in,clip,keepaspectratio]{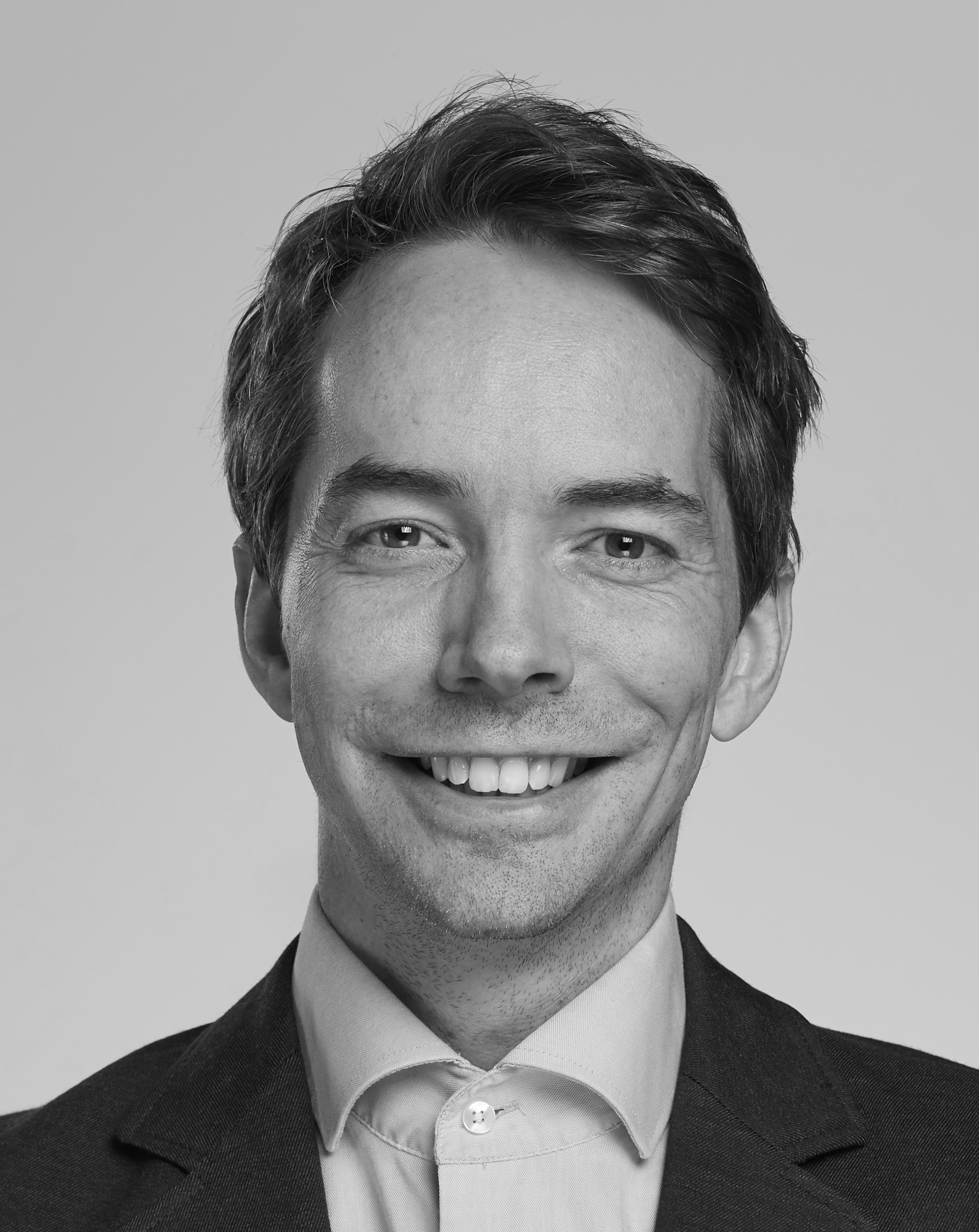}}]
  {Matthias Althoff}
  received the diploma in mechanical engineering and the Ph.D. degree in
  electrical engineering from the Technical University of Munich, Germany, in 2005 and
  2010, respectively.
  From 2010 to 2012, he was a Postdoctoral
  Researcher with Carnegie Mellon University,
  Pittsburgh, PA, USA, and from 2012 to 2013, an
  Assistant Professor with Technische Universität
  Ilmenau, Ilmenau, Germany. He is currently an
  Associate Professor in computer science with
  Technical University of Munich. His research interests include formal
  verification of continuous and hybrid systems, reachability analysis,
  planning algorithms, nonlinear control, automated vehicles, and power
  systems.
\end{IEEEbiography}

\end{document}